\newtheorem{Theorem}{Theorem}
\newtheorem{Ex}{Example}
\newtheorem{Corollary}{Corollary}
\newtheorem{definition}{Definition}
\newtheorem{rmk}{\sf Remark}
\newtheorem{Prop}{Proposition}
\newcommand*{\QEDA}{\hfill\ensuremath{\blacksquare}}%
\DeclareMathAlphabet{\mathpzc}{OT1}{pzc}{m}{it}
\renewcommand{\baselinestretch}{1.9}
\newacronym{CSI}{CSI}{Channel State In\-for\-ma\-tion}
\begin{document}
{{\renewcommand{\baselinestretch}{1.2}
\author{\authorblockN{Pin-Hsun Lin, \IEEEmembership{Member, IEEE}, Eduard A. Jorswieck, \IEEEmembership{Senior Member, IEEE},\\ Rafael F. Schaefer, \IEEEmembership{Senior Member, IEEE}, Martin Mittelbach, and \\Carsten R. Janda, \IEEEmembership{Student Member, IEEE}}
\thanks{Parts of the work were presented in ITW 2016 \cite{PH_ITW16}, Cambridge, UK and SCC 2017 \cite{PH_SCC17}, Hamburg, Germany. Pin-Hsun Lin, Eduard A. Jorswieck, Martin Mittelbach and Carsten R. Janda are with the Communications Laboratory, Department of Electrical Engineering and Information Technology, Technische Universit\"{a}t Dresden, Germany. Rafael F. Schaefer is with the Information Theory and Applications Chair, Technische Universit\"{a}t Berlin, Germany.
Emails: {\{pin-hsun.lin, eduard.jorswieck, martin.mittelbach, carsten.janda\}@tu-dresden.de, rafael.schaefer@tu-berlin.de}. Part of this work is funded by FastCloud 03ZZ0517A.} }
}
\title{On Stochastic Orders and Fading Multiuser Channels with Statistical CSIT}
\maketitle \thispagestyle{empty}

{\renewcommand{\baselinestretch}{1.5}
\vspace{-2cm}
\begin{abstract}
%
In this paper, fading Gaussian multiuser channels are considered. If the channel is perfectly known to the transmitter, capacity has been established for many cases in which the channels may satisfy certain information theoretic orders such as degradedness or strong/very strong interference.
Here, we study the case when only the statistics of the channels are known at the transmitter which is an open problem in general. The main contribution of this paper is the following: First, we introduce a framework to classify random fading channels based on their joint distributions by leveraging three schemes: maximal coupling, coupling, and copulas.
The underlying spirit of all scheme is, we obtain an equivalent channel by changing the joint distribution in such a way that it now satisfies a certain information theoretic order while ensuring that the marginal distributions of the channels to the different users are not changed. The construction of this equivalent multi-user channel allows us to directly make use of existing capacity results, which includes Gaussian interference channels, Gaussian broadcast channels, and Gaussian wiretap channels.
We also extend the framework to channels with a specific memory structure, namely, channels with finite-state, wherein the Markov fading broadcast channel is discussed as a special case. Several practical examples such as Rayleigh fading and Nakagami-\textit{m} fading illustrate the applicability of the derived results.
\end{abstract}
\vspace{-0.5cm}
\begin{IEEEkeywords}
Stochastic orders, same marginal property, imperfect CSIT, maximal coupling, coupling, copulas, multi-user channels, capacity regions, channels with memory.
\end{IEEEkeywords}
{\renewcommand{\baselinestretch}{1.5}
%

%

\section{Introduction}\label{Sec_Intro}

For some Gaussian multiuser (GMU) channels with perfect channel state information at the transmitter (CSIT), due to the capability of ordering the marginal channels of different users, capacity regions have been successfully derived. These include the degraded broadcast channel (BC) \cite{Bergmans_degraded_BC, Gallager_degraded_BC}, the wiretap channel (WTC) \cite{Khisti_MIMOME,Oggier_MIMOME}, and the interference channels (IC) with \textit{strong} interference \cite{Sato_IC,Carleial_IC,HK_IC}, with \textit{very strong} interference \cite{Carleial_IC}, and in the low-interference regime
\cite{Veeravalli_sum_capacity_IC}. When fading effects of wireless channels are taken into account, capacity results of some channels have been found for perfect CSIT. For example, in \cite{Liang_fading_secrecy}, the ergodic secrecy capacity of Gaussian WTC (GWTC) is derived; in \cite{Sankar_ergodic_GIC}, the ergodic capacity regions are derived for \textit{ergodic very strong} and \textit{uniformly strong} Gaussian IC (GIC), where each realization of the fading process is a strong IC.
In practice however, due to limited feedback bandwidth and the delay caused by channel estimation, the transmitter may not be able to track channel realizations perfectly and instantaneously making the assumption of perfect CSIT too ambitious. Thus, for fast fading channels, it is more realistic to consider the case with only partial or delayed CSIT. In particular, when there is solely statistical CSIT available, capacity is known only in very few cases such as the layered BC \cite{Tse_fading}, the binary fading interference channel \cite{Vahid_binary_fading_IC}, the one-sided layered IC \cite{Guo_ZIC}, GWTC \cite{SC_TIFS}, etc. Deriving the capacity of multiuser channels usually relies on information theoretic (IT) orders such as \textit{degraded}, \textit{less noisy}, and \textit{more capable} \cite{Korner_less_noisy, Kim_lecture}, etc., which allow to order the channels accordingly. In the lack of instantaneous CSIT, identifying whether an MU channel satisfies a certain IT order or not, is usually not obvious which makes this approach of deriving capacity results much more involved. Taking fading GBC as an example, only capacity bounds can be found in \cite{Tuninetti_DIMACS03}, \cite{Jafarian_ISIT11_fading_GBC}, and \cite{Farsani_ISIT13_fading_GBC}.

In the following we give a simple motivating example from a two-user fading GBC. Without loss of generality we assume that the means and variances of the additive white Gaussian noises (AWGN) at different receivers are identical. Denote the \textit{channel gain}\footnote{In this paper we use channel \textit{gain} to denote the square of the channel magnitude.} of the two real random channels by $H_1$ and $H_2$ with probability density functions (PDF) $f_{H_1}$ and $f_{H_2}$, respectively. In Fig. \ref{Fig_channel_distr}(a), $f_{H_1}$ and $f_{H_2}$ do not intersect. Therefore, even when there is only statistical CSIT, we still can know that the realizations of $H_1$ and $H_2$, namely, $h_1$ and $h_2$, respectively, always satisfy  $h_1<h_2$. Then channel 1 is degraded with respect to channel 2. In contrast, in Fig. \ref{Fig_channel_distr}(b), the intersection of the supports of the two channels is not empty. Therefore, the trichotomy order\footnote{In order to make a consistent presentation when compare to the stochastic order, in the following, we will use \textit{trichotomy order} instead of \textit{trichotomy law} to show the three relations between two deterministic scalar variables $a$ and $b$: $a>b$, $a=b$, and $a<b$.}\cite{Apostol_calculus} of the realizations $h_1$ and $h_2$ may alter over time within a codeword length. A sufficient condition for a memoryless channel to satisfy a certain IT order is that it must be satisfied over the whole codeword length. Therefore, the transmitter is not able to directly identify the degradedness between the channels in Fig. \ref{Fig_channel_distr}(b) by just comparing $h_1$ and $h_2$. Based on the above observation, in this paper we address the following unsettled questions for GMU channels with only statistical CSIT:
\begin{enumerate}
\item How to efficiently compare channel gains solely based on their distributions with the goal to verify whether they satisfy a certain IT order or not?
\item How to derive the capacity region by exploiting such a comparison of channel gains?
\end{enumerate}

\begin{figure}[h]
\centering \epsfig{file=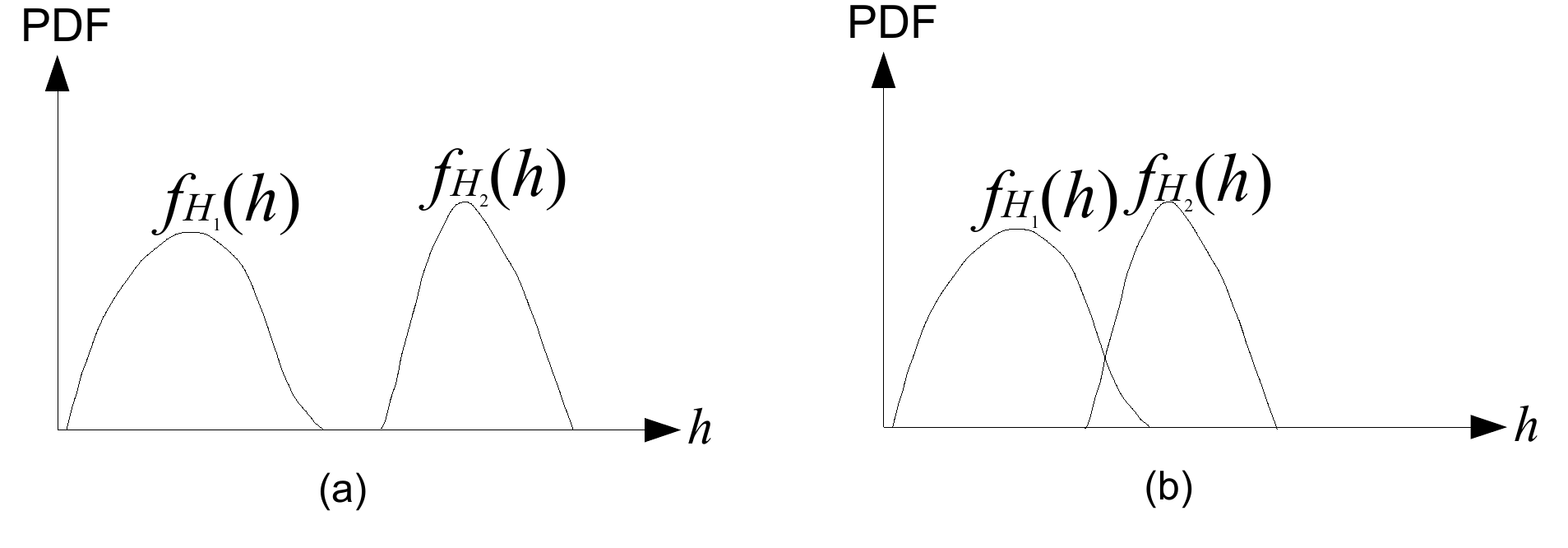, width=0.6\textwidth}
\caption{Two examples of relations between two fading channels.}
\label{Fig_channel_distr}
\end{figure}

In order to find the corresponding capacity region, in this work we resort to \textit{identifying} whether random channel gains in a GMU channel are stochastically \textit{orderable} or not. Stochastically orderable means that there exists an \textit{equivalent} GMU channel\footnote{Here the \textit{equivalent channel} means that it has the same capacity region as the original one.} in which we can reorder channel realizations among different transmitter-receiver pairs such that they satisfy a certain IT order. For example, an \textit{orderable two-user GBC} means that under the same noise distributions at the two receivers, in the equivalent GBC, one channel has realizations of channel gains always larger than the other within a codeword length. We attain this goal mainly by the following elements: \textit{stochastic orders} \cite{shaked_stochastic_order}, \textit{coupling} \cite{Thorisson_coupling}, and the \textit{same marginal property} \cite{Cover_IT_text}. The stochastic orders have been widely used in the last several decades in different areas of probability and statistics
such as reliability theory, queueing theory, and operations
research, etc., see for example \cite{shaked_stochastic_order} and references therein. Different stochastic orders such as the \textit{usual stochastic order}, the \textit{convex order}, and the \textit{increasing convex
order} can help us to identify the location, dispersion,
or both location and dispersion of random variables, respectively \cite{shaked_stochastic_order}.
Properly choosing a stochastic order to compare the
channels may allow us to construct an equivalent channel. In addition to memoryless channels, we also investigate MU channels with memory; in particular the indecomposable finite-state BC (IFSBC) \cite{Dabora_degraded_finite_state_BC}. In the IFSBC model, the channel input-output relation is governed by a state sequence that depends on the channel input, outputs, and previous
states. In addition, the effect of the initial channel state on the state transition probabilities diminishes over time. The concept of finite-state channels has been applied for example to the multiple access channel \cite{Permuter_FSMAC}, degraded BCs \cite{Dabora_degraded_finite_state_BC}, and to the case with feedback \cite{Dabora_indecomposable_finite_state_FBK}.

The main contributions of this paper are summarized as follows:
\begin{enumerate}
\item We construct three schemes for channel comparison under GBC and discuss them:
    \begin{itemize}
    \item We first invoke the concept of maximal coupling to provide an illustrative and easy access to the classification of random channels.
    \item We then exploit coupling by integrating the usual stochastic order and the same marginal property.
    \item In addition to the above schemes which are related to coupling, we explicitly construct a \textit{copula} \cite{Nelson_copulas} and prove the equivalence of coupling and copula in our setting.
\end{itemize}
\item Based on the coupling scheme,
 \begin{itemize}
\item We connect the trichotomy order among channel gains in the constructed equivalent channels to different IT orders, in order to characterize the capacity regions of the GIC and GWTC.
\item We further extend the proposed framework to time-varying channels with memory. In particular, we consider the IFSBC \cite{Dabora_degraded_finite_state_BC} as an example. The Markov fading channel, which is commonly used to model memory effects in wireless channels, is also discussed.
\item Several examples with practical channel distributions are illustrated to demonstrate the applications of the developed framework.
\end{itemize}
\end{enumerate}

Some of the main contributions are summarized in Table \ref{tab_comparison}.\\

\begin {table}
\caption {Comparison of the conditions of different Gaussian MU channels under perfect and statistical CSIT and the related capacity results.} \label{tab_comparison}
\begin{center}
\begin{tabular}{|c|c|c|c|}
  \hline
   &  Conditions under  & Conditions under  &  Capacity results under \\
   & perfect CSIT & statistical CSIT  & statistical CSIT\\\hline
    Degraded GBC&  $h_2\geq h_1$ & $H_2\geq_{st} H_1$  &  \eqref{EQ_GBC_capacity_region} \\\hline
    Strong GIC&    $h_{21}\geq h_{11}$ and $h_{12}\geq h_{22}$  & $H_{21}\geq_{st} H_{11}$ and $H_{12}\geq_{st} H_{22}$  & \eqref{EQ_CMAC} and \eqref{EQ_CMAC_def}\\\hline
    Very strong GIC&  $\frac{h_{21}}{1+P_2h_{22}}\geq h_{11}\mbox{ and }\frac{h_{12}}{1+P_1h_{11}}\geq h_{22}$ & $\frac{H_{21}}{1+P_2H_{22}}\geq_{st} H_{11}\mbox{ and }\frac{H_{12}}{1+P_1H_{11}}\geq_{st} H_{22}$  & \eqref{EQ_CMAC2}  \\\hline
    Degraded GWTC& $h\geq g$  & $H\geq_{st} G$  &  \eqref{EQ_CWTC} \\
  \hline
  Degraded IFSBC (Markov fading)& \eqref{EQ_stochastic_degraded}  & \eqref{EQ_Markov_condition1}, \eqref{EQ_Markov_condition2}, and \eqref{EQ_Markov_condition3}   &  \cite{Dabora_degraded_finite_state_BC} or \eqref{EQ_IFSBC_capacity} \\
  \hline
\end{tabular}
\end{center}
\end {table}
The remainder of the paper is organized as follows. In Section \ref{Sec_multiuser_channels}, we formulate an abstract problem and develop our framework based on maximal coupling, coupling, and copulas for fading GBC with statistical CSIT. We then apply this framework to fading GIC and GWTC. In Section \ref{Sec_memory}, we discuss the IFSBC as an application to channels with memory. Finally, Section \ref{Sec_conclusion} concludes the paper.

\emph{Notation}: Upper case normal/bold letters denote random
variables/random vectors (or matrices), which will be defined
when they are first used; lower case normal/bold letters denote
deterministic variables/vectors. Vector $\bm a\triangleq [a_i,\,a_{i+1},\,\cdots,\,a_{j-1},\,a_{j}]$ is interchangeably denoted by $a_i^j$ while $a_1^j$ is simplified as $a^j$. A diagonal matrix formed by a vector $\bm a$ is denoted by diag$\{\bm a\}$. Uppercase calligraphic letters denote sets. The expectation is denoted by $\mathds{E}[\cdot]$. We denote the probability mass function (PMF) and probability density function (PDF) of a random variable $X$ by $p_X$ and $f_X$, respectively. The probability of event $A$ is denoted by Pr$(A)$. The cumulative distribution function
(CDF) is denoted by $F_X(x)$, where $\bar{F}_X(x)=1-F_X(x)$ is the
complementary CDF (CCDF) of $X$. $\,X\sim \,F$ means that the random variable $X$ follows the
distribution with CDF $F$. The mutual information between two random variables $X$ and $Y$ is denoted by $I(X;Y)$ while the conditional mutual information given $Z$ is denoted by $I(X;Y|Z)$. The differential and conditional differential entropies are denoted by $h(\cdot)$ and $h(\cdot|\cdot)$, respectively. A Markov chain relation between $X$, $Y$, and $Z$ is denoted by $X - Y - Z$. $\mathrm{Unif}(a,b)$ denotes the uniform distribution between $a\in\mathds{R}$ and $b\in\mathds{R}$ and $\mathds{N}^0=\{0,\,\mathds{N}\}$ is the set of non-negative integers. The Bernoulli distribution with probability $p$ is denoted by Bern$(p)$. The support of a random variable $X$ is interchangeably denoted by supp$(X)$ or supp$(f_X)$. The logarithms used in the paper are all with respect to base 2. We define $C(x)\triangleq\frac{1}{2}\log(1+x)$. We denote equality in distribution by $=_d$. The convolution of functions $f$ and $g$ is denoted by $f\ast g$. Circularly symmetric complex AWGN with zero mean and variance $P$ is denoted by $\mathcal{CN}(0,P)$. The convex hull of a set $\mathcal{A}$ is denoted by $co(\mathcal{A})$.

\section{Problem Formulation and Main Results}\label{Sec_multiuser_channels}
In this section, we first introduce the problem formulation and then develop a framework to classify fading GBCs such that we are able to obtain the corresponding ergodic capacity results under statistical CSIT. After that we apply the coupling scheme to GIC and GWTC. In brevity, the underlying spirit of all schemes is: while keeping the distributions of marginal channels fixed, we change the joint distribution of the GMU in such a way that it has a certain special structure
which allows us to obtain the capacity results. In this paper we assume that each node in the considered GMU channels is equipped with a single antenna.

\subsection{Problem Formulation and Preliminaries}

As motivated above, we formulate the problem statement as follows. We have to find two sets $\mathcal{A}$ and $\mathcal{B}$: set $\mathcal{A}$ is a subset of all fading channel gains of a particular GMU, e.g., $\mathcal{A}\subset \{(H_1,\,H_2)\}$ for a two-user GBC where $H_1$ and $H_2$ are the fading channel gains to the first and second users, respectively; set $\mathcal{B}$ is composed of channel gains of an equivalent GMU. Intuitively, the sets $\mathcal{A}$ and $\mathcal{B}$ shall possess the following properties:
\begin{itemize}
 \item[$\bm {P1.}$] Channel gains in set $\mathcal{B}$ lead to (existing) capacity results, which should be the same as those of the original channels in set $\mathcal{A}$, and may follow certain IT orders.
 \item[$\bm {P2.}$] There exists a constructive way to find a mapping $f:\,\mathcal{A}\mapsto \mathcal{B}$;
\end{itemize}

The considered problem in this work is formulated in an abstract representation as follows, also illustrated in Fig. \ref{Fig_framework}.\\

\textbf{Problem 1:} Under statistical CSIT, find a set of tuples
\begin{align}
\mathcal{S}=\{&(\mathcal{A},\,f):\,\,\,\,f:\,\mathcal{A}\mapsto \mathcal{B},\label{Def_design_goal1}\\
&a\in\mathcal{A} \mbox{ and }b\in\mathcal{B}\mbox{ have the same marginal distributions, such that the capacities of the underlying}\notag\\
&\mbox{channels in $\mathcal{A}$ and $\mathcal{B}$ are the same.}\}\label{Def_design_goal2}
\end{align}

The same marginal property provides us the degree of freedom to construct equivalent channels in which the realizations of all random channel tuples are possible to be aligned in a desired IT order, while this alignment can be achieved by several schemes discussed in the following. Note that the choices of the sets $\mathcal{A}$ and $\mathcal{B}$ depend on the topologies of the MU channels, which will also be explained in the following case by case.
\begin{figure}[h]
\centering \epsfig{file=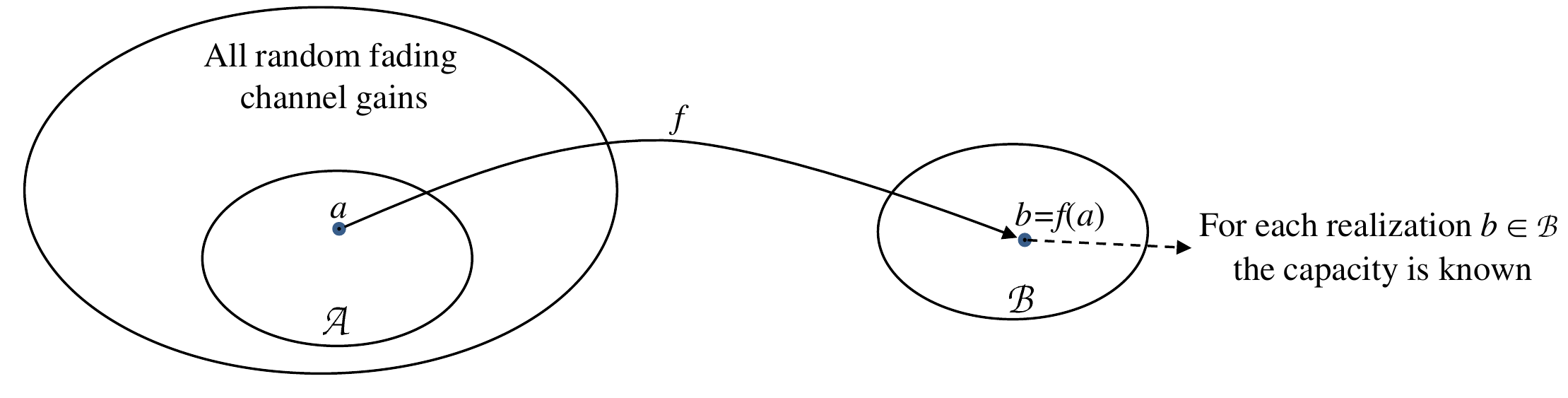, width=0.8\textwidth}
\caption{The proposed scheme identifies the ergodic capacity regions under statistical CSIT.}
\label{Fig_framework}
\end{figure}

\begin{rmk}\label{RMK:same_marginal_wiretap}\normalfont
The optimal classification identifies the three elements of a tuple $(\mathcal{A},f,\mathcal{B})$, simultaneously, instead of fixing $\mathcal{B}$ and then finding $(\mathcal{A},f)$. However, this way may result in configurations for which new inner and outer bounds have to be established, which is out of the scope of this work. Therefore, we restrict to those $\mathcal{B}$ for which capacities are known.
\end{rmk}

Some important definitions related to our solutions to \textbf{Problem 1} are shown in the following.

\begin{definition}\cite[(1.A.3)]{shaked_stochastic_order}\label{shaked_stochastic_order}
For random variables $X$ and $Y$, $X$ is smaller than $Y$ in the \textit{usual stochastic order}, denoted as, $X\leq_{st} Y$, if and only if
$\bar{F}_{X}(x)\leq \bar{F}_{Y}(x)$ for all $x\in(-\infty,\infty)$.
\end{definition}

Note that Definition \ref{shaked_stochastic_order} is applicable to both discrete or continuous random variables.

\begin{definition}\cite[Definition 2.1]{Ross_2nd_course}\label{Def_coupling}
The pair $(\tilde{X},\,\tilde{Y})$ is a coupling of the random variables $(X,Y)$ if $\tilde{X}=_{d}X$ and $\tilde{Y}=_{d}Y$.
\end{definition}

\begin{definition}\label{Def_{H_2}D_copula}
A two-dimensional copula \cite[(2.2.2a), (2.2.2b), (2.2.3)]{Nelson_copulas} is a function $C_0:\,[0,1]^2\mapsto [0,1]$ with the following properties:
\begin{enumerate}
\item For every $u,\,v\in[0,1]$,
\begin{align}
C_0(u,0)=0=C_0(0,v),\label{EQ_2.2.a}\\
C_0(u,1)=u \mbox{ and }C_0(1,v)=v;\label{EQ_2.2.b}
\end{align}
\item For every $u_1,\,u_2,\,v_1,\,v_2\in[0,1]$ such that $u_1\leq u_2$ and $v_1\leq v_2$,
\begin{align}\label{EQ_2.2.3}
C_0(u_2,v_2)-C_0(u_2,v_1)-C_0(u_1,v_2)+C_0(u_1,v_1)\geq 0.
\end{align}
\end{enumerate}
\end{definition}

\begin{definition}\cite[Section 2.2]{Ross_2nd_course}
For the random variables $(X,\,Y)$, the coupling $(\tilde{X},\,\tilde{Y})$ is called a maximal coupling if $\textnormal{Pr}(\tilde{X}=\tilde{Y})$ gets its maximal value among all the couplings of $(X,\,Y)$.
\end{definition}

\subsection{Fading Gaussian Broadcast Channels with Statistical CSIT}
The capacity of the degraded BC is known \cite{Bergmans_degraded_BC}. For non-degraded BCs, only the inner and outer bounds are known, e.g., Marton's inner bound \cite{Marton_inner_bound} and
Nair-El Gamal's outer bound \cite{Nair_outer_bound}. Therefore, it shall be
easier to characterize the capacity region of a GBC under statistical CSIT if we can identify that it is degraded\footnote{In the following, we call a stochastically degraded channel simply a degraded channel due to the same marginal property.}.
Denote the CCDFs of the random gains $H_1$ and $H_2$ in a two-user GBC by $\bar{F}_{H_1}$ and $\bar{F}_{H_2}$, respectively. Receiver $k$'s signal can be stated as
\begin{align}
Y_k=\sqrt{H_k} X+N_k,\,k=1,\,2,\label{EQ_channel_model1}
\end{align}
$X$ is the channel input with an input power constraint $\mathds{E}[X^2]\leq P_T$. The noises
$N_1$ and $N_2$ at the corresponding receivers are independent AWGN with zero mean and unit variance. We assume that there is perfect CSIR such that the receivers can compensate the phase rotation of their own channels, respectively, without changing the capacity. Then we can focus on a real GBC instead of a complex one. We also assume that the transmitter only knows the statistics but not the instantaneous realizations of $H_1$ and $H_2$.

For random variables whose PDFs intersect as in Fig. \ref{Fig_channel_distr} (b), a straightforward idea to reorder $H_1$ and $H_2$ is: we try to form a coupling $(\tilde{H}_1,\,\tilde{H}_2)$ of $(H_1,\,H_2)$ with marginal PDFs $\tilde{f}_{H_1}=f_{H_1}$ and $\tilde{f}_{H_2}=f_{H_2}$, respectively, where we have the reordered realizations of $\tilde{H}_1$ and $\tilde{H}_2$ such that Pr($\tilde{h}_1=\tilde{h}_2)=\int\min\left\{\tilde{f}_{H_1}(h),\,\tilde{f}_{H_2}(h)\right\}dh$ if $h\in\,\mbox{supp}(f_{H_1})\cap\mbox{supp}(f_{H_2})$. Otherwise, $\tilde{H}_1$ and $\tilde{H}_2$ follow the original PDFs $\tilde{f}_{H_1}$ and $\tilde{f}_{H_2}$, respectively. If this reordering is feasible, we know that one of the fading channel is always equivalently not weaker than the other one. However, does such a reordering exist? The answer is yes, and it relies on the concept of \textit{maximal coupling} \cite{Ross_2nd_course}. In addition to the maximal coupling scheme, we also propose other solutions for \textbf{Problem 1}, which are related to coupling and copulas \cite{Nelson_copulas} and all of them are summarized as follows:
\begin{enumerate}
\item The first scheme is $(\mathcal{A},f)$, achieved by maximizing $\mbox{Pr(}\tilde{H}_1=\tilde{H}_2)$, where realizations of $\tilde{H}_1$ and $\tilde{H}_2$ belonging to the intersection of the domains of $\tilde{f}_{H_1}$ and $\tilde{f}_{H_2}$ are exactly aligned, i.e., $\tilde{h}_1=\tilde{h}_2$.
\item The second scheme is $(\mathcal{A},f')$, achieved by constructing an explicit coupling that not necessarily maximizes $\mbox{Pr(}\tilde{H}_1=\tilde{H}_2)$ but each realization of channel gain pairs preserves the same trichotomy order.
\item The third scheme is $(\mathcal{A},f'')$, achieved by explicitly constructing a copula \cite{Nelson_copulas} (a new joint distribution between the fading channels), such that each realization of channel pairs has also the same trichotomy order. Note that the existence of the copula is proved by Sklar's theorem \cite{Nelson_copulas}.
\end{enumerate}

At the end, we will prove that the schemes by coupling and copulas are equivalent. We derive the two-receiver case in detail. Without loss of generality, we assume that channel 1 is degraded with respect to channel 2.
\newpage
\begin{Theorem}\label{Th_all}
\normalfont
For a two-user fading GBC with statistical CSIT, assume that the PDFs of $H_1$ and $H_2$, namely, $f_{H_1}$ and $f_{H_2}$, are continuous. Then the following selections of $\mathcal{S}$ lead to degraded GBCs in $\mathcal{B}$:
\begin{align}
\hspace{-1.4cm}\mathcal{A}=&{\normalfont \big\{ ( H_1,\, H_2 ) : H_{2}\geq_{st}H_{1}\big\}},\label{EQ_coupling_construction}\\
\hspace{-1.4cm}f(\mathcal{A})=&\{(\tilde{H}_1,\,\tilde{H}_2 ): \,\,k=1,2,\notag
\end{align}
\begin{numcases}{\hspace{3cm}\tilde{f}_{H_k}(h)=}
                     \frac{f_{H_k}(h)-f_{min}(h)}{1-p},\!\mbox{ if }V=0,\label{EQ_max_coupling1} \\
                     \frac{f_{min}(h)}{p}\,\,\,\,\,\,\,\,\,\,\,\,\,\,\,\,\,\,\,\,\,\,\,\,\,\,\,\,,\mbox{ otherwise},\label{EQ_max_coupling2}
\end{numcases}
\begin{align}
&\hspace{0.5cm}\},\notag\\
\hspace{5cm}f'(\mathcal{A})&=\{(\tilde{H}_1,\,\tilde{H}_2):\,\tilde{H}_1= F_{H_1}^{-1}(U),\,\tilde{H}_2= F_{H_2}^{-1}(U),\,U\sim\mbox{Unif}(0,1)\},\label{EQ_construct_f}\\
\hspace{5cm}f''(\mathcal{A})&=\{(\tilde{H}_1,\,\tilde{H}_2)\!:\, \bar{F}_{\tilde{H}_1,\tilde{H}_2}(\tilde{h}_1,\tilde{h}_2)=\min\{\bar{F}_{H_1}(\tilde{h}_1),\bar{F}_{H_2}(\tilde{h}_2)\}\notag\\
&\hspace{2.3cm}\mbox{    is the Fr\'{e}chet-Hoeffding upper bound of copulas}
\!\},\label{EQ_new_joint_CCDF}\\
&\hspace{-0.3cm}\mathcal{B}=\{(\tilde{H}_1,\,\tilde{H}_2)\in \{f(\mathcal{A}),\, f'(\mathcal{A}),\, f''(\mathcal{A})\}\},\notag
\end{align}
where $f_{min}(h)\triangleq \min(f_{H_1}(h),\, f_{H_2}(h))$, $V\sim \mbox{Bern}(p)$ is independent of all other random variables,  $p\triangleq\int_{-\infty}^{\infty}f_{min}(h)dh$, $h\in{\normalfont \mbox{ supp}}(H_1)\cup{\normalfont\mbox{ supp}}(H_2)$. The mapping $f'$ is equivalent to $f''$.
\end{Theorem}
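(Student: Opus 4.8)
The plan is to reduce the theorem, for each of the three maps, to two elementary properties plus one common degradedness argument, so that the constructions become interchangeable modulo bookkeeping. For a map $g\in\{f,f',f''\}$ sending $(H_1,H_2)$ to $(\tilde H_1,\tilde H_2)$, I would verify (M) marginal preservation, $\tilde H_k=_d H_k$ for $k=1,2$, and (O) almost-sure ordering, $\tilde H_1\leq \tilde H_2$. Granting (M)--(O), the equivalent fading GBC built from $(\tilde H_1,\tilde H_2)$ is degraded: conditioned on any realization with $\tilde h_1\leq \tilde h_2$, the Gaussian sub-channels $\tilde Y_k=\sqrt{\tilde h_k}X+N_k$ satisfy $\tilde Y_1 =_d \sqrt{\tilde h_1/\tilde h_2}\,\tilde Y_2 + N'$ with $N'\sim\mathcal N(0,\,1-\tilde h_1/\tilde h_2)$ independent of $\tilde Y_2$, so $X-\tilde Y_2-\tilde Y_1$ and channel $1$ is degraded with respect to channel $2$ for every realization within the codeword length. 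Property (M) makes each marginal transition $p(\tilde y_k\mid x)$ coincide with the original $p(y_k\mid x)$, so by the same marginal property \cite{Cover_IT_text} the equivalent and original BCs share the same capacity region, namely the known degraded region \eqref{EQ_GBC_capacity_region} \cite{Bergmans_degraded_BC}. The theorem thus collapses to checking (M)--(O) for $f$, $f'$, $f''$ and the final equivalence $f'\equiv f''$.

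For the coupling $f'$ in \eqref{EQ_construct_f} I would obtain (M) from the probability integral transform: since $f_{H_k}$ is continuous, $F_{H_k}$ is continuous and $F_{H_k}^{-1}(U)=_d H_k$ for $U\sim\mathrm{Unif}(0,1)$. For (O) I would translate the hypothesis through Definition \ref{shaked_stochastic_order}: $H_2\geq_{st}H_1$ means $\bar F_{H_2}\geq \bar F_{H_1}$ pointwise, equivalently $F_{H_2}\leq F_{H_1}$ pointwise, which for the (generalized) inverses gives $F_{H_1}^{-1}(u)\leq F_{H_2}^{-1}(u)$ for every $u\in(0,1)$; evaluating at $u=U$ yields $\tilde H_1\leq \tilde H_2$ almost surely. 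Since both coordinates are driven by the same $U$, this coupling is comonotone.

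For the maximal coupling $f$ in \eqref{EQ_max_coupling1}--\eqref{EQ_max_coupling2} I would get (M) by averaging the conditional densities over $V\sim\mathrm{Bern}(p)$: the marginal density of $\tilde H_k$ is $p\cdot\frac{f_{min}}{p}+(1-p)\cdot\frac{f_{H_k}-f_{min}}{1-p}=f_{H_k}$. For (O) I would split on $V$. On $\{V=1\}$ the coordinates are set equal, $\tilde H_1=\tilde H_2$. On $\{V=0\}$ the residual marginals $r_k\triangleq(f_{H_k}-f_{min})/(1-p)$ are themselves stochastically ordered, since $\int_x^\infty (r_2-r_1)\,dh=\frac{1}{1-p}\big(\bar F_{H_2}(x)-\bar F_{H_1}(x)\big)\geq 0$; coupling the residuals comonotonically (as in $f'$, restricted to the residual laws) then gives $\tilde H_1\leq \tilde H_2$ on $\{V=0\}$ too, hence almost surely. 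I expect this residual-ordering step to be the main obstacle: it is the one place where degradedness of the construction is non-obvious, because the maximal coupling only fixes $\mathrm{Pr}(\tilde H_1=\tilde H_2)=p$ and the residual marginals, so one must still exhibit an ordered coupling of the residuals, which exists exactly because $H_2\geq_{st}H_1$ propagates to them.

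Finally, for the copula $f''$ and the equivalence $f'\equiv f''$, I would show that the Fr\'echet--Hoeffding upper bound \eqref{EQ_new_joint_CCDF} is precisely the joint law of the comonotone pair of $f'$: computing the joint CCDF of $(F_{H_1}^{-1}(U),F_{H_2}^{-1}(U))$ gives $\mathrm{Pr}\big(F_{H_1}^{-1}(U)>\tilde h_1,\,F_{H_2}^{-1}(U)>\tilde h_2\big)=\mathrm{Pr}\big(U>F_{H_1}(\tilde h_1),\,U>F_{H_2}(\tilde h_2)\big)=\min\{\bar F_{H_1}(\tilde h_1),\bar F_{H_2}(\tilde h_2)\}$, matching \eqref{EQ_new_joint_CCDF}. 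Since the joint CCDF determines the bivariate law, $f'\equiv f''$, and therefore $f''$ inherits (M)--(O) from $f'$ and yields a degraded GBC as well. The only remaining technical care is the rigorous handling of generalized inverses at endpoints, which is clean here because continuity of $f_{H_k}$ makes $F_{H_k}$ continuous.
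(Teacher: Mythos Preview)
Your proposal is correct. For $f'$ and for the equivalence $f'\equiv f''$ you argue exactly as the paper does: probability integral transform for (M), monotonicity of generalized inverses under $H_2\geq_{st}H_1$ for (O), and then computing the joint law of the comonotone pair to identify it with the Fr\'echet--Hoeffding upper bound.

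For the maximal-coupling map $f$ you take a genuinely different route. The paper does not stop at showing that the residual laws $r_k=(f_{H_k}-f_{min})/(1-p)$ are stochastically ordered; it proves the stronger structural claim that their supports are separated, i.e.
\[
\sup\,\mathrm{supp}(r_1)\;<\;\inf\,\mathrm{supp}(r_2)\quad\Longleftrightarrow\quad H_1\leq_{st}H_2,
\]
the ``if'' direction via a short contradiction argument. This implies that on $\{V=0\}$ \emph{any} coupling of the residuals (not only the comonotone one) yields $\tilde h_1<\tilde h_2$, which matters because the theorem's display specifies only the conditional marginals $r_1,r_2$ and not the conditional joint law. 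Your approach instead observes, via the one-line integral $\int_x^\infty(r_2-r_1)\,dh=\frac{1}{1-p}\big(\bar F_{H_2}(x)-\bar F_{H_1}(x)\big)\geq 0$, that the residuals inherit the usual stochastic order, and then re-applies the $f'$-coupling to them. Your argument is shorter and entirely elementary; the paper's argument buys the additional insight that the construction is insensitive to the choice of residual coupling, and in particular justifies calling the $\{V=0\}$ part of the scheme ``the remaining realizations'' without prescribing their joint law.
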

\begin{proof}
The proof is relegated to \ref{APP_all}.
\end{proof}

Based on \cite{Bergmans_degraded_BC}, the capacity region $\mathcal{R}$ of a degraded fading GBC under statistical CSIT satisfying $V-X-Y_2-Y_1$ can be expressed as:
\begin{align}\label{EQ_GBC_capacity_region}
\mathcal{R}\triangleq \bigcup_{f_{VX}, \mathds{E}[X^2]\leq P_T} \left\{\begin{array}{l}
                                         (R_1,\,R_2):\,\, R_1\leq I(V;Y_1|H_1),\,\,                                        R_2\leq I(X;Y_2|V,H_2)
                                       \end{array}\right\}.
\end{align}

\begin{rmk}\normalfont
Solving the optimal $f_{VX}$ in \eqref{EQ_GBC_capacity_region} is still an open problem. In \cite{Abbe_non_gaussian}, it is shown that with statistical CSIT, a locally perturbed Gaussian input can be better than the Gaussian one.
\end{rmk}

A generalization of the set $\mathcal{A}$ in \eqref{EQ_coupling_construction} to more users can be easily derived as follows.

\begin{Corollary}\normalfont\label{coro_N_user}
For a $K$-user fading GBC with statistical CSIT, $K\geq 2$, if
$H_K\geq_{st}H_{K-1}\geq_{st}\cdots\geq_{st}H_1$, then it is degraded.
\end{Corollary}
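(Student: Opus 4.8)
The plan is to extend the comonotonic coupling scheme $f'$ of Theorem~\ref{Th_all} from two users to $K$ users by driving all channel gains through a single common uniform random variable. Concretely, I would define the equivalent gains via the quantile transform
\begin{align}
\tilde{H}_k = F_{H_k}^{-1}(U), \quad k=1,\dots,K, \quad U\sim\mbox{Unif}(0,1),
\end{align}
where the \emph{same} realization of $U$ feeds all $K$ inverse CDFs. By construction each $\tilde{H}_k$ has CDF $F_{H_k}$, so the marginals are unchanged and the same marginal property of Problem~1 holds; this certifies that the coupled $K$-user GBC (the set $\mathcal{B}$) has the same capacity region as the original channel (the set $\mathcal{A}$), exactly as in the two-user case.

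First I would invoke the quantile characterization of the usual stochastic order, the companion of Definition~\ref{shaked_stochastic_order}: $X\leq_{st}Y$ if and only if $F_X^{-1}(u)\leq F_Y^{-1}(u)$ for all $u\in(0,1)$. The continuity of the PDFs assumed in Theorem~\ref{Th_all} makes the generalized inverses well behaved. Applying this characterization along the hypothesized chain $H_1\leq_{st}H_2\leq_{st}\cdots\leq_{st}H_K$ gives, for every fixed $u$,
\begin{align}
F_{H_1}^{-1}(u)\leq F_{H_2}^{-1}(u)\leq\cdots\leq F_{H_K}^{-1}(u).
\end{align}
Since all $\tilde{H}_k$ are evaluated at the single shared $U$, this yields $\tilde{H}_1\leq\tilde{H}_2\leq\cdots\leq\tilde{H}_K$ almost surely, i.e., the gains are pointwise ordered for every realization throughout the codeword, which is precisely the condition under which a memoryless channel satisfies an IT order as discussed in Section~\ref{Sec_Intro}.

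The remaining step is to turn the pointwise gain ordering into degradedness. Because $Y_k=\sqrt{\tilde{H}_k}\,X+N_k$ with unit-variance AWGN, normalizing each output shows receiver $k$'s effective noise variance is $1/\tilde{H}_k$; the ordering $\tilde{H}_1\leq\cdots\leq\tilde{H}_K$ then makes receiver $1$ the noisiest and receiver $K$ the cleanest at every realization. I would make this rigorous by exhibiting, conditioned on each value of $U$, the physically degraded Markov chain $X-Y_K-Y_{K-1}-\cdots-Y_1$, obtained by adding independent Gaussian noise to pass from a stronger output to the next weaker one, just as in the two-user degraded GBC. I expect the main difficulty to be bookkeeping rather than conceptual: one must verify that the per-realization degradedness, which holds simultaneously for all $u$ under the common coupling, indeed certifies stochastic degradedness of the entire fading GBC, and then transport this conclusion back to the original channel through the same-marginal equivalence.
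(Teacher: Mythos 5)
Your proposal is correct and follows essentially the same route the paper intends: the corollary is presented as a direct generalization of the set $\mathcal{A}$ in \eqref{EQ_coupling_construction} via the coupling construction $f'$ of Theorem~\ref{Th_all}, i.e., driving all $K$ quantile transforms $F_{H_k}^{-1}$ with one common uniform $U$ so that the gains are almost surely ordered while the marginals, and hence the capacity region, are preserved. Your added step of exhibiting the per-realization Markov chain $X-Y_K-\cdots-Y_1$ by noise addition is exactly the standard argument the paper leaves implicit.
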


\begin{rmk}
Since the usual stochastic order is closed under convolution \cite[Theorem
1.A.3]{shaked_stochastic_order}, we can easily extend the discussions above to \textit{clusters of scatterers} \cite{Tse_scattering}.
Furthermore, the numbers of clusters of scatterers can even be random variables as well and the discussions above hold as long as those numbers follow the usual stochastic order \cite[Theorem
1.A.4]{shaked_stochastic_order}, again.
\end{rmk}

\begin{rmk}\normalfont
The three schemes in Theorem \ref{Th_all} are similar in the sense that, given the marginal distributions, we aim to form joint distributions to achieve a degraded GBC. Contrary to copulas, schemes by coupling and maximal coupling do not directly construct an explicit joint distribution. Instead, equivalent marginal distributions are explicitly constructed, i.e., \eqref{EQ_max_coupling1}, \eqref{EQ_max_coupling2}, and \eqref{EQ_construct_f}, which implicitly determines the joint distributions. The scheme from copulas, in contrast, explicitly constructs a joint distribution from the marginal distributions. Note that maximal coupling and coupling reorder the fading channel realizations differently. More specifically, maximal coupling reorders realizations belonging to the intersection of PDFs in a strict sense, i.e., $\tilde{h}_1=\tilde{h}_2$. In contrast, coupling by \eqref{EQ_construct_f}, in general, does not reorder part of the realizations as $\tilde{h}_1=\tilde{h}_2$, but only results in $\tilde{h}_2>\tilde{h}_1$ when $F_{H_k}(\tilde{h}_k)\in(0,1),\,k=1,\,2$, which can be easily observed from \eqref{EQ_construct_f}.
\end{rmk}

\begin{Ex}
Consider a three-user fading GBC whose channel magnitudes are independent Nakagami-$m$ distributed with \textit{shape parameters} $m_1$, $m_2$, and $m_3$, and \textit{spread parameters} $w_1$, $w_2$, and $w_3$, respectively. From Corollary \ref{coro_N_user}, we know that this channel is degraded if
\begin{align}
\frac{\gamma\left(m_3,\frac{m_3 x}{w_3}\right)}{\Gamma(m_3)}\geq \frac{\gamma\left(m_2,\frac{m_2 x}{w_2}\right)}{\Gamma(m_2)}\geq \frac{\gamma\left(m_1,\frac{m_1 x}{w_1}\right)}{\Gamma(m_1)},\,\,\forall x,\notag
\end{align}
where $\gamma(s,x)=\int_0^xt^{s-1}e^{-t}dt$ is the incomplete gamma function and $\Gamma(s)=\int_0^{\infty}t^{s-1}e^{-t}dt$ is the ordinary gamma function \cite[p. 255]{Stegun_handbook}. An example satisfying the above inequality is $(m_1,w_1)=(0.5,1)$, $(m_2, w_2)=(1,2)$, and $(m_3, w_3)=(1,3).$
\end{Ex}

{\begin{rmk}\normalfont
Note that the probability of $\tilde{H}_1\neq\tilde{H}_2$ from the maximal coupling $(\tilde{H}_1,\,\tilde{H}_2)$ of $(H_1,\,H_2)$ can be described by the total variation distance between $H_1$ and $H_2$ \cite[Proposition 2.7]{Ross_2nd_course} as
\begin{align}
\mbox{Pr(}\tilde{H}_1\neq\tilde{H}_2)=1-\int_{-\infty}^{\infty}\min(f_{H_1}(h),f_{H_2}(h))dh= d_{TV}(H_1,H_2),
\end{align}
where $d_{TV}(X,Y)= \underset{\mathcal{A}}{\sup}|\mbox{Pr(}X\in \mathcal{A})-\mbox{Pr(}Y\in \mathcal{A})|$ is the total variation distance. It is consistent to the intuition from \eqref{EQ_max_coupling1} and \eqref{EQ_max_coupling2} that, if $d_{TV}(H_1,H_2)$ is larger (or equivalently, the overlapping area of $\tilde{f}_{H_1}$ and $\tilde{f}_{H_2}$ is smaller), then $\mbox{Pr(}\tilde{H}_1\neq\tilde{H}_2)$ is smaller.
\end{rmk}}

\begin{rmk}\normalfont
Note that when the statistical CSIT is considered instead of the perfect CSIT, the expression of usual stochastic order makes it a good generalization of the trichotomy order commonly used in the expressions of IT orders, i.e., we just replace the operator $\geq$ by $\geq_{st}$. However, the reordering operation done in the equivalent channel may be too strict. This is because the considered IT channel orders are in fact described solely by (conditional) distributions but not by the realizations of individual channels.
\end{rmk}



{
\begin{rmk} \label{RMK_copula}\normalfont
We can also directly identify that \eqref{EQ_new_joint_CCDF} is a \textit{two-dimensional copula} \cite[Definition 2.2.2]{Nelson_copulas} by Definition \ref{Def_{H_2}D_copula} without the aid of Fr\'{e}chet-Hoeffding-like upper bound. This identification is relegated to \ref{APP_copula}.
\end{rmk}}

\subsection{Fading Gaussian Interference Channels with Statistical CSIT}\label{Sec_SISO}
In this section, we identify sufficient conditions to obtain the capacity regions of equivalent two-user Gaussian interference channels with strong and very strong interferences by the proposed scheme. Examples illustrate the results.
We assume that each receiver perfectly knows the two channels and the received signals can be stated as
\begin{align}
Y_1&=\sqrt{H_{11}}e^{j\Phi_{11}} X_1+\sqrt{H_{12}}e^{j\Phi_{12}} X_2+N_1\triangleq \tilde{H}_{11}X_1+\tilde{H}_{12}X_2+N_1,\label{EQ_channel_model1_1}\\
Y_2&=\sqrt{H_{21}}e^{j\Phi_{21}} X_1+\sqrt{H_{22}}e^{j\Phi_{22}} X_2+N_2\triangleq \tilde{H}_{21}X_1+\tilde{H}_{22}X_2+N_2,\label{EQ_channel_model1_2}
\end{align}
where $H_{kj}$ and $\Phi_{kj}\in [0,\,2\pi]$ are real-valued non-negative mutually independent random
variables denoting the channel gain and the phase of the fading channel between the $j$-th transmitter to the $k$-th receiver, respectively, where $k,\,j\in\{1,\,2\}$. The CCDF of $H_{kj}$ is denoted by
$\bar{F}_{H_{kj}}$. The channel inputs at the transmitters 1 and 2 are denoted by $X_1$ and $X_2$, respectively. We consider the channel input power constraints
$\mathds{E}[|X_1|^2]\leq P_1$ and $\mathds{E}[|X_2|^2]\leq P_2$, respectively. Noises
$N_1\sim\mathcal{CN}(0,1)$ and $N_2\sim\mathcal{CN}(0,1)$ at the corresponding receivers are independent. We
assume that the transmitters only know the statistics but not the
instantaneous realizations of $\{H_{kj}\}$ and $\{\Phi_{kj}\}$. Hence, the channel input signals $\{X_j\}$ are not functions of the channel realizations and therefore are independent to $\{H_{kj}\}$ and $\{\Phi_{kj}\}$. In addition, without loss of generality, we assume that channel gains, channel phases, and noises are mutually independent. We assume perfect CSIR. However, we are not able to simplify the channel into a real one as in the GBC case even with perfect CSIR due to signals from different transmitters encountering different fadings.
We derive two results for two-user GICs with only statistical CSIT as follows, which correspond to the stochastic version of strong and very strong interferences, respectively.
\begin{Theorem}\normalfont\label{Th_cond_strong_IC}
The selection of the following set $\mathcal{S}$ leads to GICs with strong interference in $\mathcal{B}$:
\begin{align}
\mathcal{A}&=\{(H_{11},\,H_{12},\,H_{21},\,H_{22}):\,H_{21}\geq_{st}H_{11}\mbox{ and }H_{12}\geq_{st}H_{22};\,H_{jk},\,j,\,k=1,\,2 \mbox{ are mutually independent}\},\label{EQ_strong_IC_constraint}\\
f'(\mathcal{A})&=\left\{(H_{11}',\,H_{12}',\,H_{21}',\,H_{22}'):\,H_{21}' = F_{H_{21}}^{-1}(U_1),\,H_{11}' = F_{H_{11}}^{-1}(U_1),\,U_1\sim\mbox{Unif}(0,1),\right.\,\,\notag\\
&\hspace{4.6cm}H_{12}' = F_{H_{12}}^{-1}(U_2),\,H_{22}' = F_{H_{22}}^{-1}(U_2),\,U_2\sim\mbox{Unif}(0,1),\notag\\
&\left.\hspace{4.5cm}\,U_1\mbox{ is independent of }U_2\right\},\label{EQ_strong_IC_mapping}\\
\mathcal{B}&=\{(H_{11}',\,H_{12}',\,H_{21}',\,H_{22}')\in f'(\mathcal{A})\},
\end{align}
and the corresponding ergodic capacity region is as:
\begin{align}\label{EQ_CMAC}
\mathcal{C}(P_1,P_2)=\mathcal{C}_1(P_1)\cap\mathcal{C}_2(P_2),
\end{align}
where
\begin{align}\label{EQ_CMAC_def}
\mathcal{C}_j(P_j)\triangleq\big\{(R_1,R_2):\,R_1&\leq \mathds{E}[C(H_{j1}P_1)],\notag\\
R_2&\leq \mathds{E}[C(H_{j2}P_2)],\notag\\
R_1+R_2&\leq \mathds{E}[C(H_{j1}P_1+H_{j2}P_2)]\big\},\,j=1,\,2.
\end{align}
\end{Theorem}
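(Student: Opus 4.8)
The plan is to exploit the comonotone (quantile) coupling in \eqref{EQ_strong_IC_mapping} to convert the stochastic orders defining $\mathcal{A}$ into realization-wise orders in $\mathcal{B}$, and then to transport the known ergodic strong-interference capacity region back to $\mathcal{A}$ via the same marginal property. The first thing I would verify is that $f'$ is a coupling in the sense of Definition \ref{Def_coupling} that leaves each receiver's statistics unchanged. Since $U_1,U_2\sim\mathrm{Unif}(0,1)$, the probability integral transform gives $H_{kj}'=_d H_{kj}$ for every pair, so all marginals are preserved; moreover receiver $1$ observes $(H_{11}',H_{12}')=(F_{H_{11}}^{-1}(U_1),F_{H_{12}}^{-1}(U_2))$ and receiver $2$ observes $(H_{21}',H_{22}')=(F_{H_{21}}^{-1}(U_1),F_{H_{22}}^{-1}(U_2))$, and because $U_1\perp U_2$ the two gains at each receiver stay independent with their original laws, i.e. $(H_{k1}',H_{k2}')=_d(H_{k1},H_{k2})$ for $k=1,2$. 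The coupling only creates correlation \emph{across} receivers (between $H_{11}'$ and $H_{21}'$ through $U_1$, and between $H_{12}'$ and $H_{22}'$ through $U_2$), which is exactly the degree of freedom granted by the same marginal property.

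Next I would turn $\geq_{st}$ into a pointwise inequality. By Definition \ref{shaked_stochastic_order}, $H_{21}\geq_{st}H_{11}$ means $\bar{F}_{H_{21}}\geq\bar{F}_{H_{11}}$, equivalently $F_{H_{21}}\leq F_{H_{11}}$, which for the generalized quantile functions gives $F_{H_{21}}^{-1}(u)\geq F_{H_{11}}^{-1}(u)$ for all $u\in(0,1)$. Feeding the common $U_1$ then yields $H_{21}'\geq H_{11}'$ almost surely, and likewise $H_{12}\geq_{st}H_{22}$ yields $H_{12}'\geq H_{22}'$ almost surely. Hence every realization of $\mathcal{B}$ obeys the deterministic strong-interference conditions $h_{21}\geq h_{11}$ and $h_{12}\geq h_{22}$ of Table \ref{tab_comparison}, so $\mathcal{B}$ is a strong GIC with probability one; the phases $\Phi_{kj}$ are untouched by $f'$ and, under perfect CSIR, are irrelevant to the gain ordering.

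With $\mathcal{B}$ almost surely a strong IC and its per-receiver laws matching those of $\mathcal{A}$, I would invoke \textbf{Problem 1} to equate the two capacity regions: because the inputs are independent of the channel state and each receiver decodes from its own $(H_{k1},H_{k2},\Phi_{k1},\Phi_{k2})$, every mutual-information term in both the achievable region and the converse depends on a single receiver's joint law, which the first step showed to be invariant under $f'$. It then suffices to apply the strong-interference capacity characterization \cite{Sato_IC,Carleial_IC,HK_IC} in its ergodic form \cite{Sankar_ergodic_GIC}: when strong interference holds for almost every realization each receiver can decode both messages, so the region is the intersection of the two ergodic multiple-access regions, which is precisely \eqref{EQ_CMAC} with $\mathcal{C}_j(P_j)$ as in \eqref{EQ_CMAC_def}.

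The step I expect to be the main obstacle is making the same-marginal invariance of the third stage fully rigorous, namely checking that the strong-IC converse — in particular the genie-aided sum-rate bound that contrasts what the two receivers can decode — involves only per-receiver quantities and is therefore insensitive to the cross-receiver correlation that $f'$ introduces. A secondary technical point is the treatment of flat segments of the CDFs (ties), where the generalized inverse $F^{-1}(u)=\inf\{x:F(x)\geq u\}$ must be used so that both the almost-sure ordering and the distributional identities $H_{kj}'=_d H_{kj}$ continue to hold.
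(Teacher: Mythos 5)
Your first two stages match the paper's proof exactly: the quantile coupling preserves all marginals, the independence of $U_1$ and $U_2$ keeps the per-receiver pairs $(H_{k1}',H_{k2}')$ distributed as $(H_{k1},H_{k2})$ (this is precisely the same-marginal check \eqref{EQ_SMP_joint_distr_channel_gain} carried out at the end of the paper's Appendix III), and feeding the common uniform through the generalized inverses turns $\geq_{st}$ into the almost-sure orderings $H_{21}'\geq H_{11}'$ and $H_{12}'\geq H_{22}'$. Up to that point your argument is sound.

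The genuine gap is in the third stage. You close the proof by "applying the strong-interference capacity characterization in its ergodic form \cite{Sankar_ergodic_GIC}," but that reference establishes the uniformly strong ergodic capacity region under \emph{perfect} CSIT, where the transmitters can adapt power and rate to the realizations; the paper explicitly notes that the statistical-CSIT version has not been reported and must be proved. This is where essentially all of the paper's technical work lies: a genie-aided converse starting from Fano's inequality, a single-letterization of the non-trivial entropy difference $h(\tilde{h}_{22}^n X_2^n+N_2^n)-h(\tilde{h}_{12}^n X_2^n+N_1^n)$ that exploits the realization-wise inequality $|\tilde{h}_{12}|\geq|\tilde{h}_{22}|$ to introduce an auxiliary noise $N^n$ and invoke the data-processing inequality, and finally an appeal to the MAC-without-CSIT result of \cite{Shamai_MAC_no_CSIT} to conclude that Gaussian inputs at full power are optimal (no power or rate adaptation being possible under statistical CSIT). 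None of this is supplied or sketched in your proposal. You also slightly misplace the obstacle: the same-marginal invariance of the converse that you flag is the easy part (it reduces to \eqref{EQ_SMP_joint_distr_channel_gain} and a three-line independence argument); the hard part is that the converse itself, in the form \eqref{EQ_genie_UB_sum_rate_final1}--\eqref{EQ_genie_UB_sum_rate_final2} matching the inner bound \eqref{EQ_CMAC}, has to be constructed from scratch for the statistical-CSIT setting rather than imported from the perfect-CSIT literature.
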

\begin{proof}
The proof is relegated to \ref{APP_Th_cond_strong_IC}.
\end{proof}


%
%

For the GIC with very strong interference, we introduce two feasible sets $\mathcal{S}_1\triangleq(\mathcal{A}_1,\,f_1'(\mathcal{A}_1))$ and $\mathcal{S}_2\triangleq(\mathcal{A}_2,\,f_2'(\mathcal{A}_2))$, to \textbf{Problem 1}. In $\mathcal{A}_1$ we assume that the channel gains are mutually independent, while in $\mathcal{A}_2$ the channel gain are allowed to have a specific correlation structure.
\newpage
\begin{Theorem}\normalfont\label{Th_cond_very_strong_IC}
Define $Z_1\triangleq\frac{H_{21}}{1+P_2H_{22}}$ and $Z_2\triangleq\frac{H_{12}}{1+P_1H_{11}}$. The selections of the following sets $\mathcal{S}_1$ and $\mathcal{S}_2$ lead to GICs with very strong interference in $\mathcal{B}$:
\begin{align}
\mathcal{S}_1=\Bigg\{&(\mathcal{A}_1,\,f_1'):\notag\\
&\mathcal{A}_1=\left\{(H_{11},\,H_{12},\,H_{21},\,H_{22}):\,Z_1\geq_{st} H_{11}\mbox{ and }Z_2\geq_{st} H_{22},\,H_{jk},\,{j,\,k=1,\,2}\mbox{ are mutually independent}\right\},\label{Constraint_SMP_very_strong2}\\
&f_1'(\mathcal{A}_1)=\left\{(H_{11}',\,H_{12}',\,H_{21}',\,H_{22}'):\,H_{12}' =
(1+P_2H_{22}')F_{Z_1}^{-1}(U_1),\,H_{11}'=F_{H_{11}}^{-1}(U_1),\,U_1\sim\mbox{Unif}(0,1),
\right.\notag\\
&\left.\hspace{4.6cm}  H_{21}' = (1+P_1H_{11}')F_{Z_2}^{-1}(U_2),\,H_{22}'=F_{H_{22}}^{-1}(U_2),\,U_2\sim\mbox{Unif}(0,1) \right\}\Bigg\},\label{EQ_VS_f1_prime_2nd}\\
\mathcal{S}_2=\Bigg\{&(\mathcal{A}_2,\,f_2'):\notag\\
&\mathcal{A}_2=\Big\{(H_{11},\,H_{12},\,H_{21},\,H_{22}):\,Z_1\geq_{st} H_{11}\mbox{, }Z_2\geq_{st} H_{22},\mbox{ and for all }a,\,b\geq 0,\label{Constraint_very_strong}\\
&\hspace{3cm}F_{Z_1,\,H_{22}}(a,b)=\min\left\{F_{Z_1}(a),\,F_{H_{22}}(b)\right\},\,F_{Z_2,\,H_{11}}(a,b)=\min\left\{F_{Z_2}(a),\,F_{H_{11}}(b)\right\}\Big\},\label{Constraint_SMP_very_strong}\\
&f_2'(\mathcal{A}_2)=\left\{(H_{11}',\,H_{12}',\,H_{21}',\,H_{22}'):\,H_{12}' =
(1+P_2H_{22}')F_{Z_1}^{-1}(U),\,H_{11}'=F_{H_{11}}^{-1}(U),
\right.\notag\\
&\left.\hspace{4.6cm}  H_{21}' = (1+P_1H_{11}')F_{Z_2}^{-1}(U),\,H_{22}'=F_{H_{22}}^{-1}(U),\,U\sim\mbox{Unif}(0,1) \right\}\Bigg\},\\
&\hspace{-1.2cm}\mathcal{B}=\{(H_{11}',\,H_{12}',\,H_{21}',\,H_{22}')\in f_1'(\mathcal{A}_1)\mbox{ or }f_2'(\mathcal{A}_2)\},
\end{align}
and the corresponding ergodic capacity region is as:
\begin{align}\label{EQ_CMAC2}
\mathcal{C}(P_1,P_2)=\big\{(R_1,R_2):\,R_1&\leq \mathds{E}[C(H_{11}P_1)],\notag\\
R_2&\leq \mathds{E}[C(H_{22}P_2)]\big\}.
\end{align}
\end{Theorem}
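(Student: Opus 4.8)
The plan is to handle the two feasible sets $\mathcal{S}_1$ and $\mathcal{S}_2$ in parallel, reusing the template of Theorem \ref{Th_cond_strong_IC}: I would show that each map $f_i'$ is a coupling that (i) fixes the capacity-relevant marginals and (ii) converts the stochastic conditions $Z_1\geq_{st}H_{11}$ and $Z_2\geq_{st}H_{22}$ into a \emph{realization-wise} very-strong condition on $\mathcal{B}$, after which the ergodic capacity of a very-strong GIC \cite{Sankar_ergodic_GIC,Carleial_IC} collapses the region to the interference-free rectangle \eqref{EQ_CMAC2}.

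First I would invoke the quantile form of Definition \ref{shaked_stochastic_order}: $Z_1\geq_{st}H_{11}$ is equivalent to $F_{Z_1}^{-1}(u)\geq F_{H_{11}}^{-1}(u)$ for all $u\in(0,1)$, and similarly for the pair $Z_2,H_{22}$. The maps $f_1',f_2'$ (cf. \eqref{EQ_VS_f1_prime_2nd}) are precisely the comonotone couplings built from this characterization: they force $Z_1'\triangleq H_{21}'/(1+P_2H_{22}')=F_{Z_1}^{-1}(U_1)=_dZ_1$ together with $H_{11}'=F_{H_{11}}^{-1}(U_1)$, so that $Z_1'=F_{Z_1}^{-1}(U_1)\geq F_{H_{11}}^{-1}(U_1)=H_{11}'$ almost surely; symmetrically $Z_2'\triangleq H_{12}'/(1+P_1H_{11}')=_dZ_2$ with $Z_2'\geq H_{22}'$ a.s. Thus every realization of the channel in $\mathcal{B}$ obeys $\tfrac{h_{21}'}{1+P_2h_{22}'}\geq h_{11}'$ and $\tfrac{h_{12}'}{1+P_1h_{11}'}\geq h_{22}'$, which is exactly the deterministic very-strong condition of the table, so the known result applies per fading state and yields the rectangle with boundaries $\mathds{E}[C(H_{11}'P_1)]$ and $\mathds{E}[C(H_{22}'P_2)]$.

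Next I would discharge the same-marginal (\textbf{Problem 1}) requirement. Because $H_{11}'=F_{H_{11}}^{-1}(U_1)=_dH_{11}$ and $H_{22}'=F_{H_{22}}^{-1}(U_2)=_dH_{22}$ (a single $U$ in the case of $\mathcal{S}_2$), the rectangle boundaries coincide with those of the original channel, so the region is literally \eqref{EQ_CMAC2}. For $\mathcal{S}_2$ this closes the argument cleanly: the comonotonicity hypotheses \eqref{Constraint_very_strong}--\eqref{Constraint_SMP_very_strong} are exactly what make the single-$U$ coupling reproduce the full joint laws of $(H_{21},H_{22})$ and $(H_{11},H_{12})$, so the per-receiver channels are unchanged and the same-marginal property for the interference channel transfers the capacity region verbatim from $\mathcal{B}$ to $\mathcal{A}_2$.

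The step I expect to be hardest is the same-marginal property for $\mathcal{S}_1$. Under $f_1'$ the independent uniforms $U_1\perp U_2$ make $Z_1'$ independent of $H_{22}'$, whereas in the independent model $\mathcal{A}_1$ the derived gain $Z_1=H_{21}/(1+P_2H_{22})$ is \emph{not} independent of $H_{22}$; consequently $f_1'$ changes the joint law of $(H_{21}',H_{22}')$, and even the marginal of $H_{21}'$. The argument therefore cannot reproduce the per-receiver channel verbatim, and I would instead show that the very-strong capacity depends on the fading law only through the marginals of $Z_1,Z_2,H_{11},H_{22}$: the successive-decoding achievability of the rectangle uses only $\mathds{E}[C(Z_1P_1)]$ and $\mathds{E}[C(H_{22}P_2)]$ (and the symmetric pair), and the converse is the single-user outer bound $R_1\leq\mathds{E}[C(H_{11}P_1)]$, $R_2\leq\mathds{E}[C(H_{22}P_2)]$; since $f_1'$ preserves exactly those four marginals, both channels share the region \eqref{EQ_CMAC2}. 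The monotonicity inequality $\mathds{E}[C(Z_1P_1)]\geq\mathds{E}[C(H_{11}P_1)]$, which follows from $Z_1\geq_{st}H_{11}$ and the fact that $C$ is increasing, is what guarantees the interference-decoding constraints are slack so that the rectangle is simultaneously achievable and tight.
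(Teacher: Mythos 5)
Your proposal follows the paper's template for the core steps: the quantile (comonotone) coupling converts $Z_1\geq_{st}H_{11}$ and $Z_2\geq_{st}H_{22}$ into the realization-wise very-strong condition on $\mathcal{B}$, and for $\mathcal{S}_2$ the comonotonicity hypothesis \eqref{Constraint_SMP_very_strong} together with the bijection $(H_{21},H_{22})\leftrightarrow(Z_1,W_2)$, $W_2=H_{22}$, reproduces the joint per-receiver laws, exactly as in \ref{APP_Th_cond_very_strong_IC}. Where you genuinely diverge is $\mathcal{S}_1$, and your diagnosis there is sharper than the paper's own treatment. The paper asserts $f_{H_{21}'H_{22}'}=f_{H_{21}H_{22}}$ via the chain $f_{H_{21}'|H_{22}'=h}=f_{Z_1'(1+P_2h)|H_{22}'=h}=f_{Z_1(1+P_2h)|H_{22}=h}=f_{H_{21}|H_{22}=h}$; the middle equality replaces the conditional law of $Z_1$ given $H_{22}=h$ by an unconditional one (since $Z_1'$ is independent of $H_{22}'$ under $f_1'$), which would require $Z_1$ to be independent of $H_{22}$ --- false under mutual independence of the gains, because $Z_1=H_{21}/(1+P_2H_{22})$ is a nondegenerate function of $H_{22}$. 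For i.i.d.\ exponential gains one even gets $\mathds{E}[H_{21}']=\mathds{E}[1+P_2H_{22}]\,\mathds{E}[Z_1]\neq\mathds{E}[H_{21}]$, so the marginal itself changes, as you predicted. Your repair --- observing that both the successive-decoding inner bound and the single-user outer bound depend on the fading law only through the marginals of $Z_1,\,Z_2,\,H_{11},\,H_{22}$, all preserved by $f_1'$, so that the rectangle \eqref{EQ_CMAC2} follows directly from $\mathds{E}[C(Z_jP_j)]\geq\mathds{E}[C(H_{jj}P_j)]$, $j=1,2$ --- is valid and is what actually carries the claim for $\mathcal{S}_1$; it buys a proof that bypasses the same-marginal machinery entirely for that case. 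The one step you leave implicit is the Gaussian optimality needed for the single-user outer bounds (and for evaluating the successive-decoding rates) under statistical CSIT; the paper closes this by noting that the absence of a sum-rate constraint decouples the weighted sum rate into two point-to-point fading terms, each maximized by a Gaussian input, and you should add that observation (or an equivalent citation) to make the converse airtight.
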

\begin{proof}
The proof is relegated to \ref{APP_Th_cond_very_strong_IC}.
\end{proof}

\begin{rmk}\normalfont
It can be observed that the conditions in \eqref{EQ_strong_IC_constraint} and \eqref{Constraint_very_strong}, \eqref{Constraint_SMP_very_strong} are generalized from the trichotomy orders in the GIC under strong and very strong interference constraints with perfect CSIT, respectively, to the stochastic orders. This means that, perfect CSIT is equivalent to having a Dirac delta function and then knowing the statistics is identical to knowing the perfect CSIT. Therefore, \eqref{EQ_strong_IC_constraint} and \eqref{Constraint_very_strong}, \eqref{Constraint_SMP_very_strong} cover the strong and very strong interference conditions of perfect CSIT cases, respectively.
\end{rmk}

\begin{rmk}\normalfont
A recent work \cite{Vahid_binary_fading_IC} proves the capacity region of a two-user binary fading interference channel with statistical CSIT under both weak and strong interferences, where the received signals are given by
\begin{align}
Y_k=H_{kk}X_k\oplus H_{k\bar{k}}X_{\bar{k}},\,k=1,\,2, \label{EQ_Vahid_IC}
\end{align}
where $\bar{k}=3-k$, $H_{kk},\,H_{k\bar{k}}\in\{0,\,1\}$ and all algebraic operations are in the binary field. Our result in Theorem \ref{Th_cond_strong_IC} and also the discussion in Section \ref{Sec_multiuser_channels} are valid for the binary fading IC \eqref{EQ_Vahid_IC}. More specifically, let $H_{11}\sim \mbox{Bern}(p_d)$, $H_{22}\sim \mbox{Bern}(p_d)$, $H_{12}\sim \mbox{Bern}(p_c)$, and $H_{21}\sim \mbox{Bern}(p_c)$ and the strong interference channel in \cite{Vahid_binary_fading_IC} is defined as $p_d\leq p_c\leq 1$. From the CCDFs it is easy to see that $H_{21}\geq_{st} H_{11}$. Similarly, we can find $H_{12}\geq_{st} H_{22}$. This fact manifests that \eqref{EQ_strong_IC_constraint} is a more general expression of the strong interference condition under statistical CSIT, i.e., it can be used for either discrete or real valued random channels. {Cases in \cite{Vahid_binary_fading_IC} achieving capacity regions but not being covered in our current paper can be considered as future works.}
\end{rmk}

In the following we provide examples to show scenarios in which the sufficient conditions in Theorem \ref{Th_cond_strong_IC} and Theorem \ref{Th_cond_very_strong_IC} are feasible.

\begin{Ex}: Assume that the squares of the four channel gains follow exponential distributions, i.e., $H_{jk}\sim \mbox{Exp}(1/\sigma_{jk}^2)$, $j,\, k=1,2$. From its CCDF, it is easy to see that if $\sigma_{21}^2\geq\sigma_{11}^2$ and $\sigma_{12}^2\geq\sigma_{22}^2$, then the two constraints in \eqref{EQ_strong_IC_constraint} are fulfilled.
\end{Ex}

\begin{Ex}:
Here we show an example for $(\mathcal{A}_1,\,f_1')$ in Theorem \ref{Th_cond_very_strong_IC}. We first find the distributions of the two ratios of random variables in \eqref{Constraint_very_strong}. From \cite[(16)]{Hassibi_ratio_quadratic_form} we can derive
\begin{align}
F_{Z_1}(h)&\overset{(a)}=\textsf{u}(h)-\Sigma_{i=1}^2\frac{\lambda_i^2}{\Pi_{l\neq i}(\lambda_i-\lambda_l)}\frac{1}{|\lambda_i|}e^{\frac{-h}{\lambda_i}}\textsf{u}\left(\frac{h}{\lambda_i}\right)\notag\\
&\overset{(b)}=\! \textsf{u}(h)-\frac{1}{\sigma_{21}^2+hP_2\sigma_{22}^2}e^{\frac{-h}{\sigma_{21}^2}}\textsf{u}\left(\!\frac{h}{\sigma_{21}^2}\!\right),\label{EQ_F_Z}
\end{align}
where in (a), $\textsf{u}(h)$ is the unit step function, $\{\lambda_i\}$ are the eigenvalues of ${\normalfont \mbox{diag}}\{[\sigma_{21}^2,\,\,0]\}-h\cdot P_2\cdot {\normalfont \mbox{diag}}\{[0,\,\,\sigma_{22}^2]\}$, i.e., $\{\lambda_i\}=\{\sigma_{21}^2,\,-h\cdot P_2\cdot\sigma_{22}^2\}$; in (b) we substitute the two eigenvalues into the right hand side (RHS) of (a).
Then we evaluate the first constraint in \eqref{Constraint_very_strong} by checking the difference of CCDFs of $Z_1$ and $H_{11}$, i.e., $\bar{F}_{Z_1}(h)-\bar{F}_{H_{11}}(h)=1-F_Z(h)-e^{\frac{-h}{\sigma_{11}^2}}$, numerically. In the first comparison, we fix the variances of the cross channels as $c\triangleq\sigma_{12}^2=\sigma_{21}^2=1$ and the transmit powers $P\triangleq P_1=P_2=1$, and set the variances of the dedicated channels as $a\triangleq\sigma_{11}^2=\sigma_{22}^2=0.1,\,0.3,\,0.5,\mbox{ and }\,0.7$. Since the conditions in \eqref{Constraint_very_strong} are symmetric and the considered settings are symmetric, once the first condition in \eqref{Constraint_very_strong} is valid, the second one will be automatically valid. The results are shown in Fig. \ref{Fig_very_strong_diff_a} where the vertical axis is the difference between the CCDFs of $Z_1$ and $H_{11}$. From Fig. \ref{Fig_very_strong_diff_a} we can observe that only the values $a=0.1,\,0.3,$ and $0.5$ result in $(H_{11},\,H_{12},\,H_{21},\,H_{22})$ satisfying \eqref{Constraint_very_strong}, while the difference of the CCDFs is negative when $H$ approaches zero under $P=100$. We also investigate the effect of the transmit power constraints to the validity of the sufficient condition in \eqref{Constraint_very_strong}. We consider the case with $a=0.1,\,c=1$ with $P=1,\,10,\,50,$ and $100$ (in linear scale). From Fig. \ref{Fig_very_strong_diff_P} we can observe that only $P=1,\,10,$ and $50$ satisfy \eqref{Constraint_very_strong}.
\end{Ex}

\begin{figure}[h]
\centering \epsfig{file=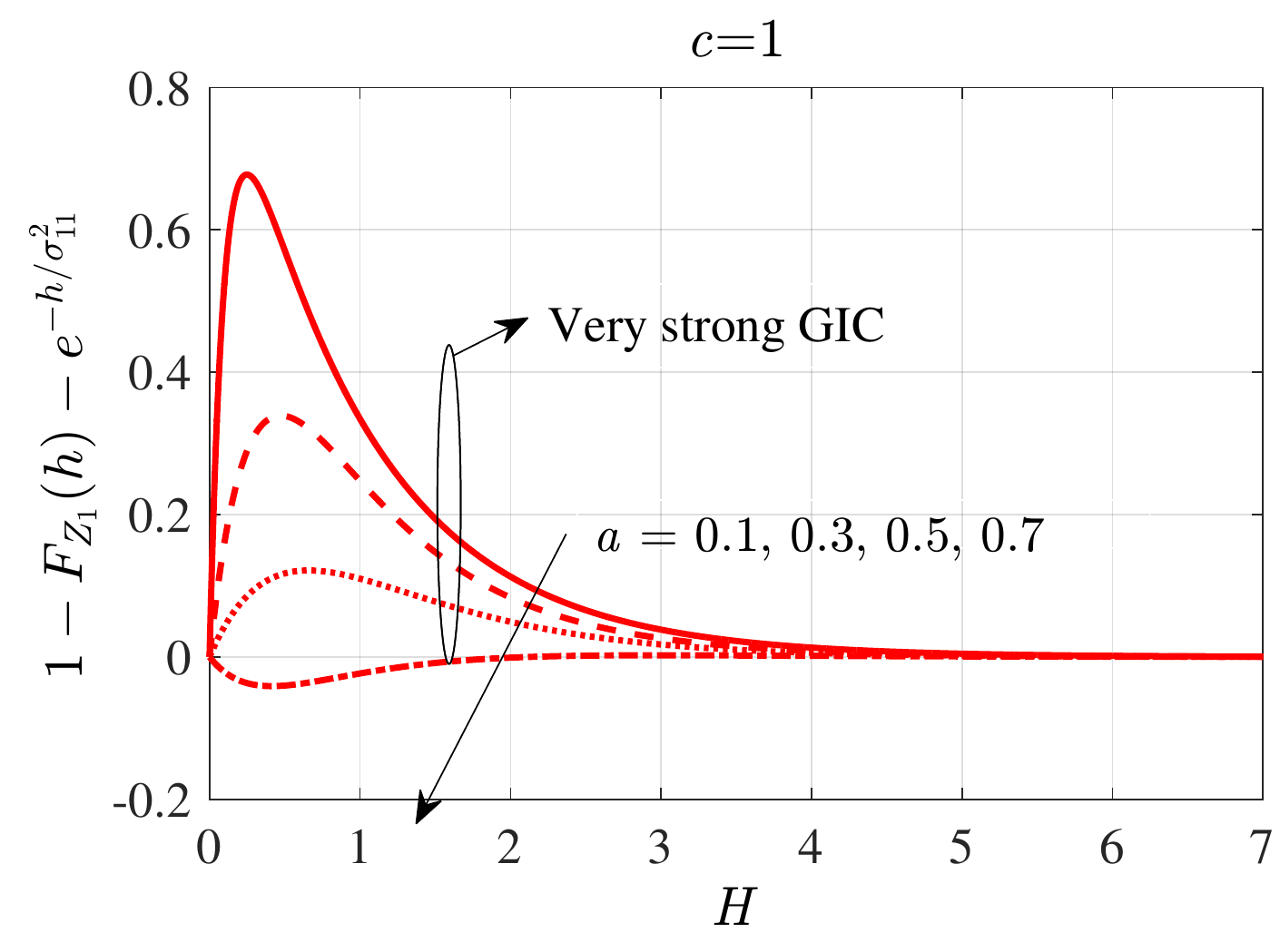, width=0.5\textwidth}
\caption{Identification of the constraints in \eqref{Constraint_very_strong} under different variances of channel gains of the dedicated channels with $c=1$.}
\label{Fig_very_strong_diff_a}
\end{figure}

\begin{figure}[h]
\centering \epsfig{file=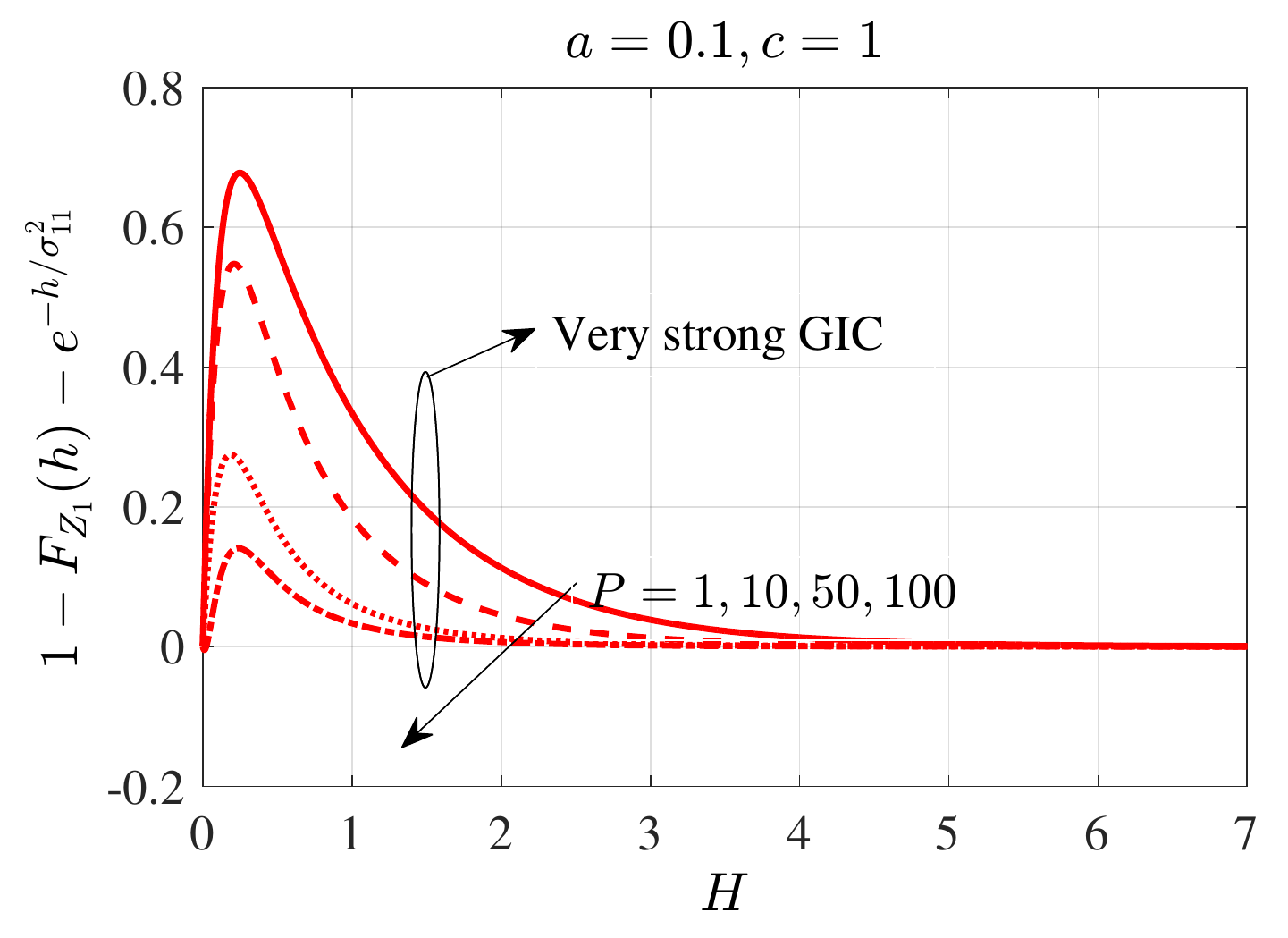, width=0.5\textwidth}
\caption{Identification of the constraints in \eqref{Constraint_very_strong} under different transmit power constraints with $a=0.1$ and $c=1$.}
\label{Fig_very_strong_diff_P}
\end{figure}

\subsection{Fading Gaussian Wiretap Channels with Statistical CSIT}\label{Sec_PHYSEC}

%
%
%
%
For GWTCs with statistical CSIT, compared to \cite{PH_fading_UB}, here we provide a more intuitive and complete derivation for the secrecy capacity under the weak secrecy constraint. The received signals at the legitimate receiver and the eavesdropper are given respectively as:
\begin{align}
Y=\sqrt{H}X+N_Y,\,\,Z=\sqrt{G}X+N_Z,\notag
\end{align}
where $H$ and $G$ are channel gains from Alice to Bob and Eve, respectively, while $N_Y$ and $N_Z$ are independent AWGNs at Bob and Eve, respectively. Without loss of generality, we assume both $N_Y$ and $N_Z$ being with zero means and unit variances. Assume that Alice only knows the statistics of both $H$ and $G$, while Bob knows perfectly the realization of $H$ and Eve knows perfectly both $G$ and $H$. Denote the transmit power constraint by $P_T$.

\begin{Theorem}\normalfont\label{Th_WTC}
The selection of the following set $\mathcal{S}$ leads to degraded GWTCs in $\mathcal{B}$:
\begin{align}
\mathcal{A}&=\{(H,\, G):\, H\geq_{st} G\},\label{Constraint_WTC_degraded}\\
f'(\mathcal{A})&=\{(\tilde{H},\,\tilde{G}):\,\tilde{H} = F_{H}^{-1}(U),\,\, \tilde{G} = F_{G}^{-1}(U),\,\,U\sim\mbox{Unif}(0,1)\},\label{EQ_WTC_f}\\
\mathcal{B}&=\{(\tilde{H},\,\tilde{G})\in f'(\mathcal{A})\},
\end{align}
then the ergodic secrecy capacity with statistical CSIT of both $H$ and $G$ is
\begin{align}\label{EQ_CWTC}
{C}_S(P_T)=\mathds{E}[C(H P_T)-C(G P_T)].
\end{align}
\end{Theorem}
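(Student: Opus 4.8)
The plan is to mirror the GBC argument of Theorem~\ref{Th_all}: first turn the stochastic order $H\geq_{st}G$ into a \emph{pointwise} ordering of the coupled gains, then use that ordering to exhibit the equivalent GWTC as degraded, and finally combine the same marginal property with the secrecy capacity of degraded Gaussian wiretap channels. First I would record that $H\geq_{st}G$ is, by Definition~\ref{shaked_stochastic_order}, equivalent to $F_{H}(x)\leq F_{G}(x)$ for all $x$, so the quantile functions satisfy $F_{H}^{-1}(u)\geq F_{G}^{-1}(u)$ for every $u\in(0,1)$. Driving the common $U\sim\mbox{Unif}(0,1)$ through \eqref{EQ_WTC_f} therefore produces gains with $\tilde H\geq\tilde G$ almost surely, while $\tilde H=_{d}H$ and $\tilde G=_{d}G$ by Definition~\ref{Def_coupling}. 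This realization-wise domination is the structural payoff of the comonotonic coupling.

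Next I would convert $\tilde H\geq\tilde G$ into state-wise stochastic degradedness. Fixing a realization $(\tilde h,\tilde g)$ with $\tilde h\geq\tilde g$ and unit-variance noises, I would set $\tilde Z=\sqrt{\tilde g/\tilde h}\,Y+\tilde N$ with $\tilde N\sim\mathcal{N}\big(0,\,1-\tilde g/\tilde h\big)$ independent of $Y$; a variance check gives $\tilde Z=\sqrt{\tilde g}X+(\text{unit-variance Gaussian independent of }X)$, so $\tilde Z$ shares the conditional law of $Z$ given $X$ while $X-Y-\tilde Z$ is Markov. Hence Eve's channel is a degraded version of Bob's in every state, and the admissibility of the added noise variance $1-\tilde g/\tilde h\geq0$ is exactly what the pointwise ordering supplies.

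With degradedness secured I would invoke the same marginal property (property $\bm{P1}$). Because Eve knows both $H$ and $G$ while the input is independent of the states, the secrecy functional collapses to $\mathds{E}_{H}[I(V;Y|H)]-\mathds{E}_{G}[I(V;Z|G)]$, which depends on the fading law only through the \emph{marginals} of $H$ and $G$, never through their joint distribution. Consequently the original channel in $\mathcal{A}$ and the comonotonic channel in $\mathcal{B}$ have the same secrecy capacity, so I may finish the analysis on the degraded equivalent channel, where the auxiliary variable is superfluous and $V=X$ is optimal.

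Finally I would evaluate the degraded equivalent channel. Achievability uses a fixed Gaussian input of full power $P_{T}$ with per-state wiretap coding: degradedness in every state makes the per-state secrecy rate $C(\tilde hP_{T})-C(\tilde gP_{T})$ nonnegative and achievable, and ergodic averaging yields $\mathds{E}[C(HP_{T})-C(GP_{T})]$. For the converse I would show, for any fixed input of power $P'\leq P_{T}$, that the per-realization difference $I(X;\sqrt{\tilde h}X+N_{Y})-I(X;\sqrt{\tilde g}X+N_{Z})$ is at most $C(\tilde hP')-C(\tilde gP')$: by the I-MMSE relation the map $s\mapsto I(X;\sqrt{s}X+N)-C(sP')$ has derivative $\tfrac{1}{2}\big(\mathrm{mmse}_{X}(s)-\tfrac{P'}{1+sP'}\big)\leq0$, since the MMSE never exceeds the linear MMSE $P'/(1+sP')$, so the map is nonincreasing and $\tilde h\geq\tilde g$ forces the bound. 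Averaging over the comonotonic joint, rewriting through the marginals, and using that $\mathds{E}[C(HP)-C(GP)]$ is increasing in $P$ then shows full-power Gaussian input is optimal, matching \eqref{EQ_CWTC}. I expect the main obstacle to be precisely this converse in the fading regime: one must rule out a \emph{single}, state-independent input beating the state-wise Gaussian benchmark, which is where the monotonicity of $s\mapsto I(X;\sqrt{s}X+N)-C(sP')$ and the pointwise ordering $\tilde h\geq\tilde g$ must be deployed together; the degradation check and the achievability are comparatively routine.
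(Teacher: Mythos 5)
Your argument is sound and reaches the theorem, but it departs from the paper's proof at the two places where the real work is done. First, on the issue that Eve observes both $H$ and $G$ (so her marginal channel $p_{Z,H,G|X}$ is \emph{not} invariant under re-coupling of $(H,G)$): the paper resolves this operationally, by passing to a genie-weakened wiretap channel in which Eve knows only $G^n$, showing its secrecy capacity upper-bounds the original and that coupling preserves $f(z,g|x)$ for this weakened channel; you instead argue at the level of the single-letter functional, noting that $I(V;Z,H,G)=I(V;Z|H,G)$ collapses to $\mathds{E}_{G}[I(V;Z|G)]$ because the conditional law of $Z$ given $(X,H,G)$ involves only $G$. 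Your observation is correct and arguably cleaner, but it silently leans on the Csisz\'{a}r--K\"{o}rner prefix formula being the converse for the \emph{original, possibly non-degraded} continuous-alphabet fading channel with receiver side information, whereas the paper only ever needs the capacity of a degraded channel (after weakening Eve), which is a safer foothold; you should at least state the one-line justification that $I(V;Z|H{=}h,G{=}g)$ is constant in $h$. Second, on Gaussian optimality: the paper extends Khisti's MISOME lemma, writing $h(\tilde{Y}|\tilde{Z},\tilde{H})=h(\tilde{Y}-\alpha\tilde{Z}|\tilde{Z},\tilde{H})\leq\mathds{E}[h(\tilde{Y}-\alpha\tilde{Z}|\tilde{H}=\tilde{h})]$ with $\alpha$ the linear MMSE coefficient and equality for Gaussian $X$ (after first using the bijection $\tilde{G}=F_{G}^{-1}(F_{H}(\tilde{H}))$ to replace conditioning on $\tilde{G}$ by conditioning on $\tilde{H}$); you use the I-MMSE monotonicity of $s\mapsto I(X;\sqrt{s}X+N)-C(sP')$, which is a valid alternative and has the nice feature of delivering the per-realization bound and full-power optimality in one stroke. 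Both of your substitutions are legitimate; what the paper's route buys is that every step is tied to a known capacity theorem for a degraded channel, while yours buys a shorter path at the cost of invoking the general non-degraded converse and the I-MMSE machinery.
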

\begin{proof}
The proof is relegated to \ref{APP_pf_WTC}.
\end{proof}

\begin{rmk}\label{rmk_explanation_gaussian_optimal}\normalfont
{Recall that for a degraded GBC with only statistical CSIT, we cannot claim that the capacity region is achievable by a Gaussian input. On the contrary, for a degraded GWTC with only statistical CSIT, we can prove that a Gaussian input is optimal. The difference can be explained in the following. For a GWTC, there is only one message to be transmitted to the single dedicated receiver and the Markov chain $U - X - Y -Z$ reduces to $X - Y - Z$ \cite{csiszar1978broadcast} when it is degraded, i.e., no prefixing is needed. The simplification on solving the optimal channel input distribution from two random variables to only one makes it easier to prove that Gaussian input is optimal. Note that when the GWTC is not degraded, the optimal $P_{U,X}$ is unknown in general. On the other hand, for a two-user degraded GBC there are two messages, both $U$ and $X$ carry messages in the chain $U - X - Y- Z$.

}
\end{rmk}

\begin{rmk}\label{rmk_SKG}\normalfont
The developed framework can also be applied to secret key generation (SKG), using either channel or source model \cite{Bloch_book}. Since the SKG based on channel model can be derived from a conceptual WTC \cite{Maurer_SKG93}, which can be deduced from Theorem \ref{Th_WTC}, we focus on the discussion of SKG based on source model. In the source model, there exists a common random source observed by Alice, Bob, and Eve. In Gaussian Maurer's (satellite) model \cite{Naito_satellite}, the observed signal is the common random source passing through an AWGN channel. If that source is affected by the medium, e.g., the wireless channel, we can form a fading Gaussian Maurer's (satellite) model, which is similar to a fading GBC with an additional eavesdropper. Related discussions can be found in \cite{PHL_SSKG_J}.
\end{rmk}

\section{Channels with Memory}\label{Sec_memory}
In this section, we discuss channels with a specific structure of memory, namely, channels with finite-state \cite{McMillan_FSC}. In particular, we investigate the relationship between the stochastic orders for random processes and the ergodic capacity of a BC with finite state as a representative example. Due to the memory, the concept of degradedness and the same marginal property have to be carefully revisited. Then we discuss the usual stochastic order for random processes.

\subsection{Preliminaries}
\begin{definition}[\normalfont\textbf{Finite state broadcast channels (FSBC) } \cite{Dabora_degraded_finite_state_BC}]\label{Def_FSBC}
The discrete finite-state broadcast channel is defined by the triplet $\{\mathcal{X}\times\mathcal{S},\,p(y,z,s|x,s'),\,\mathcal{Y}\times\mathcal{Z}\times\mathcal{S}\}$, where $X$ is the input symbol, $Y$ and $Z$ are the output symbols, $S'$ and $S$ are the channel states at the end of the previous and the current time instants, respectively. $\mathcal{S},\,\mathcal{X},\,\mathcal{Y}$, and $\mathcal{Z}$ are finite sets. The PMF of the FSBC satisfies
\begin{align}
p(y_i,\,z_i,\,s_i|y^{i-1},\,z^{i-1},\,s^{i-1},\,x^{i},\,s_0)=p(y_i,\,z_i,\,s_i|x_i,\,s_{i-1}),
\end{align}
where $s_0$ is the initial channel state.
\end{definition}

For FSBC, a single letter expression of the condition for degradedness is in general not possible. Therefore we introduce the following definitions of physical and stochastic degradedness for FSBC.

\begin{definition}[\normalfont\textbf{Degradedness for FSBC} \cite{Dabora_degraded_finite_state_BC}]\label{Def_degraded_memory}
An FSBC is called physically degraded if for every $s_0\in\mathcal{S}$ and time instant $i$ its PMF satisfies
\begin{align}
p(y_i|x^i,y^{i-1},z^{i-1},s_0)&=p(y_i|x^i,y^{i-1},s_0),\\
p(z_i|x^i,y^{i},z^{i-1},s_0)&=p(z_i|y^i,z^{i-1},s_0).
\end{align}
The FSBC is called stochastically degraded if there exists a PMF $\tilde{p}(z|y)$ such that for every block length $n$ and initial state $s_0\in\mathcal{S}$ such that
\begin{align}
p(z^n|x^n,s_0)&=\sum_{\mathcal{Y}^n}p(y^n,z^n|x^n,s_0)=\sum_{\mathcal{Y}^n}p(y^n|x^n,s_0)\prod_{i=1}^n \tilde{p}(z_i|y_i).\label{EQ_stochastic_degraded}
\end{align}
\end{definition}

Note that two important properties of the FSBC considered in Definition \ref{Def_degraded_memory} are: 1) The channel output at the weaker receiver does not contain more information than that of the stronger receiver. 2) The stronger output up to the current time instant makes the causally degraded output independent to the channel input up to the current time instant. Note also that Definition \ref{Def_degraded_memory} can be easily specialized to the degradedness of the memoryless case.


In this work, we consider the \textit{indecomposable} FSBC (IFSBC) \cite{Gallager_book}, where the effect of the initial channel state $s_0$ on the state transition probabilities diminishes over time. To apply the concept of stochastic orders, similar to the aforementioned memoryless channels, we need the same marginal property. However, to take the memory effect into account, again, we need to consider a multi-letter version, which can be easily proved by removing the memoryless property in the proof of the same marginal property of the memoryless case, e.g., in  \cite[Theorem 13.9]{Moser_adv}.


\begin{Corollary}\label{lemma_same_marginal_memory}\normalfont
The capacity region of an IFSBC depends only on the conditional marginal distributions $p_{Y^n|X^n,\,S_0}$ and $p_{Z^n|X^n,\,S_0}$ and not on the joint conditional distribution $p_{Y^n,Z^n|X^n,\,S_0}$.
\end{Corollary}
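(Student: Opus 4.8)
The plan is to lift the standard ``same marginal property'' for broadcast channels (as in \cite[Theorem 13.9]{Moser_adv}) from the memoryless setting to the block-length-$n$ finite-state setting, the only modification being that we never invoke a product factorization of the transition law across time. Fix a block length $n$ and an initial state $s_0\in\mathcal{S}$, and consider an arbitrary code consisting of an encoder $x^n(w_1,w_2)$ and two decoders $\hat{w}_1(y^n)$ and $\hat{w}_2(z^n)$. Since no feedback is present, the codebook is fixed and all randomness in $(Y^n,Z^n)$ is generated by the channel law $p_{Y^n,Z^n|X^n,S_0}$ given the transmitted codeword and $s_0$.

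First I would decompose the average probability of error. Writing $P_{e,1}(s_0)\triangleq\Pr(\hat{W}_1\neq W_1\mid S_0=s_0)$ and $P_{e,2}(s_0)\triangleq\Pr(\hat{W}_2\neq W_2\mid S_0=s_0)$, the usual sandwich bound gives
\begin{align}
\max\{P_{e,1}(s_0),\,P_{e,2}(s_0)\}\le P_e^{(n)}(s_0)\le P_{e,1}(s_0)+P_{e,2}(s_0),\notag
\end{align}
so the average error probability $P_e^{(n)}(s_0)$ vanishes if and only if both $P_{e,1}(s_0)$ and $P_{e,2}(s_0)$ vanish. The key step is then to observe that each per-receiver error probability is a functional of the code and of a \emph{single} conditional marginal only. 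Since $\hat{W}_1$ is a deterministic function of $Y^n$ alone, $P_{e,1}(s_0)$ can be expressed as a sum over messages and over $y^n$ of the error-event indicator weighted by $p_{Y^n|X^n,S_0}(y^n\mid x^n(w_1,w_2),s_0)$; the output $Z^n$, and hence the joint law $p_{Y^n,Z^n|X^n,S_0}$, never enters. Symmetrically, $P_{e,2}(s_0)$ depends on the channel only through $p_{Z^n|X^n,S_0}$. Crucially, this uses only that each decoder observes its own output, not any factorization of the transition probability over time slots---this is exactly what is meant by ``removing the memoryless property'' from the proof of \cite[Theorem 13.9]{Moser_adv}. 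Consequently, two IFSBCs sharing the same conditional marginals $p_{Y^n|X^n,S_0}$ and $p_{Z^n|X^n,S_0}$ for every $n$ and every $s_0$ assign identical error probabilities to every code, and hence admit exactly the same set of achievable rate pairs; since the same codes are good (or bad) for both channels, both achievability and converse transfer simultaneously.

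The only remaining point is to pass from this per-$(n,s_0)$ statement to the capacity region itself, which for the IFSBC is defined through a limit in $n$ together with the vanishing influence of the initial state. Because achievability of a rate pair is governed entirely by the asymptotic behaviour of $P_{e,1}(s_0)$ and $P_{e,2}(s_0)$, each of which is determined by its respective conditional marginal, the limiting region inherits dependence on the marginals alone; indecomposability then guarantees that this region does not depend on the particular $s_0$, so no joint information can re-enter through the initial state. I expect the main obstacle to be not the bookkeeping decomposition of the error probability---which is essentially routine---but rather checking carefully that the marginal-only dependence is genuinely preserved under the $s_0$-uniform, liminf-in-$n$ definition of capacity for the finite-state channel, so that the conclusion holds for the limiting region and not merely for each fixed $(n,s_0)$.
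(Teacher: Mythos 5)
Your proposal is correct and follows essentially the same route as the paper, which simply notes that the memoryless same-marginal argument of \cite[Theorem 13.9]{Moser_adv} carries over once the product factorization across time is dropped: each decoder's error probability is a functional of its own conditional marginal $p_{Y^n|X^n,S_0}$ or $p_{Z^n|X^n,S_0}$ only. Your additional care about the limit in $n$ and the initial state via indecomposability is a reasonable elaboration of the same idea rather than a different approach.
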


\subsection{Sufficient Conditions for an IFSBC to be Degraded}
In this subsection, we identify the condition that an IFSBC fulfills the usual stochastic order. Again we may invoke the coupling scheme combined with the same marginal property to form an equivalently degraded channel, where for all time instants, fading states to all receivers follow the same trichotomy order. However, when the memory effect is taken into account, the stochastic order introduced in the previous section is not sufficient. On the contrary, we need to resort to the stochastic order for a random process, which is capable of capturing the time structure of memory channels. We focus our discussion on the time homogeneous\footnotemark$\,$  finite-state Markov channels \footnotetext{The transition probability matrix does not change over time.}for two main reasons. First, it is useful for modeling mobile wireless channels when the underlying channel state process is time-invariant and indecomposable \cite{Das_MAC_Markov}, \cite{Wang_markov_fading}, \cite{Goldsmith_PhD_thesis}, \cite{SPM_Markov_fading}. Second, the structure of a Markov chain simplifies the sufficient condition of usual stochastic order for random processes, which increases the tractability and provides insights on the analysis of fading channel with memory.
%



Consider a $k$-th order time-homogeneous Markov process with alphabet $\mathcal{T}$, $|\mathcal{T}|=N$, described by a transition probability matrix $\bm P$, which is fixed over time. Indices of each row and column of $\bm P$ represent the current \textit{super states} $\bm s=(s_{j},s_{j+1},\cdots,s_{j+k-1})$ and the next super state $\bm s'=(s_{j+1},s_{j+2},\cdots,s_{j+k})$, respectively, where $j\in\mathds{N}$. Denote the $j$-th row of $\bm P$ as $\bm p_j$. Then the transition probability from the super state $\bm s$ to $\bm s'$ is expressed as $P_{\bm s\rightarrow \bm s'}\triangleq  \frac{P_{s_{j+k},\bm s}}{P_{\bm s}}=P_{s_{j+k}|\bm s}$. To simplify the expression, we define the following mapping
\begin{align}\label{EQ_g}
g: L \mapsto\bm s,
\end{align}
where $L\in\{1,\cdots,N^k\}$ is an integer indicating the row of $\bm P$ and $\bm s\in \mathcal{T}^k$.


\begin{Theorem}\normalfont\label{Th: st_stochastic_process}
Consider a two-user indecomposable finite-state Markov fading BC where the component Markov fading channels are both with $k$-th order, namely, $\{H_1(m)\}$ and $\{H_2(m)\}$ with fading states arranged in an increasing manner with respect to the state-values, where $m$ is the time index, and with transition matrices $\bm {P}$ and $\bm {Q}$, respectively. Then the BC is degraded if
\begin{align}
\mathcal{A}=&\{(\{H_1(m)\},\{H_2(m)\}):\,\notag\\
&H_1(0)\leq_{st} H_2(0), \label{EQ_Markov_condition1}\\
&[H_1(m)|H_1(m-1)=h_1(m-1),\cdots,H_1(0)=h_1(0)]\leq_{st}\notag\\
   &[H_2(m)|H_2(m-1)=h_2(m-1),\cdots,H_2(0)=h_2(0)],\notag\\
   &\mbox{whenever $h_1(j)\leq h_2(j),\,j=0,\,1,\,\cdots,\,m-1,$ \mbox{if $1\leq m\leq k$, and,}} \label{EQ_Markov_condition2}\\
&\bar{F}_{\bm {p_l}}(n)\leq \bar{F}_{\bm {q_s}}(n),\,\forall\,n,\,\forall \,(l,s)=\left\{l,s|g(l)\leq g(s)\right\},\,\mbox{if } m>k\},\label{EQ_Markov_condition3}
\end{align}
where $\bar{F}_{\bm {p_i}}$ and $\bar{F}_{\bm {q_i}}$ are the CCDFs of the $i$-th state of $\{H_1(m)\}$ and $\{H_2(m)\}$, respectively, $n$ is the index of channel states, and $\leq$ compares two vectors element-wisely.
\end{Theorem}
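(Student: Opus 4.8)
The plan is to construct, by coupling, an equivalent Markov fading BC that is physically degraded, and then invoke the same marginal property to transfer the conclusion back to the original channel. Recall the coupling characterization of the usual stochastic order (Strassen): for random variables, $X\leq_{st}Y$ in the sense of Definition~\ref{shaked_stochastic_order} holds if and only if there is a coupling $(\tilde X,\tilde Y)$ in the sense of Definition~\ref{Def_coupling} with $\tilde X\leq\tilde Y$ almost surely, and the quantile coupling $\tilde X=F_X^{-1}(U)$, $\tilde Y=F_Y^{-1}(U)$ with $U\sim\mathrm{Unif}(0,1)$ already used in \eqref{EQ_construct_f} realizes it. The goal is therefore to build a joint law for $(\{\tilde H_1(m)\},\{\tilde H_2(m)\})$ such that each coordinate process keeps its original law, $\{\tilde H_1(m)\}=_d\{H_1(m)\}$ and $\{\tilde H_2(m)\}=_d\{H_2(m)\}$, while $\tilde H_1(m)\leq\tilde H_2(m)$ for every $m$ with probability one. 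Once this pathwise ordering is secured, at each time instant the gain to user~1 is no larger than that to user~2 under common noise statistics, so the coupled BC is physically degraded; since coupling leaves the conditional marginals $p_{Y^n|X^n,S_0}$ and $p_{Z^n|X^n,S_0}$ intact, Corollary~\ref{lemma_same_marginal_memory} then guarantees that the original BC has the same capacity region as a degraded one, i.e.\ it is stochastically degraded in the sense of \eqref{EQ_stochastic_degraded}.

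I would construct the coupling sequentially in time and establish the pathwise ordering by induction on $m$. For the base case $m=0$, condition \eqref{EQ_Markov_condition1} gives $H_1(0)\leq_{st}H_2(0)$, so the quantile coupling yields $\tilde H_1(0)\leq\tilde H_2(0)$. For the transient regime $1\leq m\leq k$, assume inductively that the already-coupled histories satisfy $\tilde h_1(j)\leq\tilde h_2(j)$ for all $j<m$. Condition \eqref{EQ_Markov_condition2} asserts exactly that, for such ordered histories, the conditional law $[H_1(m)\,|\,H_1(m-1)=\tilde h_1(m-1),\dots]$ is $\leq_{st}$ the conditional law $[H_2(m)\,|\,H_2(m-1)=\tilde h_2(m-1),\dots]$. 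Applying the quantile coupling to these two conditional distributions extends the trajectory by one step while preserving $\tilde H_1(m)\leq\tilde H_2(m)$ and reproducing each process's own one-step conditional law.

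For the stationary regime $m>k$ I would pass to the super-state description underlying the transition matrices $\bm P$ and $\bm Q$. By the inductive hypothesis the last $k$ coupled states satisfy $(\tilde h_1(m-k),\dots,\tilde h_1(m-1))\leq(\tilde h_2(m-k),\dots,\tilde h_2(m-1))$ element-wise; with the states arranged increasingly, these vectors are super states $g(l)$ and $g(s)$ with $g(l)\leq g(s)$ in the sense of \eqref{EQ_g}. Condition \eqref{EQ_Markov_condition3} then reads $\bar F_{\bm p_l}(n)\leq\bar F_{\bm q_s}(n)$ for all $n$, which by Definition~\ref{shaked_stochastic_order} is precisely the one-step stochastic order between the next-state distributions given by the rows $\bm p_l$ of $\bm P$ and $\bm q_s$ of $\bm Q$. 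The quantile coupling of these two row distributions again extends the ordered trajectory by one step with the correct marginals, closing the induction.

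The main obstacle is not any single step but showing that the sequential conditional couplings splice into a single joint law whose two coordinate marginals are the genuine process laws of $\{H_1(m)\}$ and $\{H_2(m)\}$. Here I would appeal to the stochastic-order theory for stochastic processes (the Kamae--Krengel--O'Brien monotone-coupling construction), which the conditions \eqref{EQ_Markov_condition1}--\eqref{EQ_Markov_condition3} are tailored to: they make the conditional distributions monotone in the conditioning history and ordered between the two chains, which is exactly what is needed for an Ionescu--Tulcea type assembly of the joint measure to both respect each marginal and maintain pathwise ordering. A secondary point to check carefully is the seam between $m\leq k$ and $m>k$, ensuring the partial histories of condition \eqref{EQ_Markov_condition2} hand off consistently to the super-state condition \eqref{EQ_Markov_condition3} once $k$ consecutive states have accumulated, with indecomposability invoked only to guarantee a well-defined ergodic capacity for the resulting degraded IFSBC.
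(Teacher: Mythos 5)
Your proposal is correct and follows essentially the same route as the paper: the paper verifies that conditions \eqref{EQ_Markov_condition1}--\eqref{EQ_Markov_condition3} are exactly the hypotheses of the strong stochastic order for processes \cite[Theorem 6.B.31]{shaked_stochastic_order} (with \eqref{EQ_Markov_condition3} obtained by rewriting the conditional order for $m>k$ in terms of the CCDFs of the rows of $\bm P$ and $\bm Q$ indexed by ordered super states), then invokes the resulting monotone coupling with $\mathrm{Pr}(\tilde H_1(m)\leq\tilde H_2(m))=1$ and the same marginal property to conclude degradedness via \eqref{EQ_stochastic_degraded}. The only difference is presentational: you unpack the cited theorem into its sequential quantile-coupling proof (Kamae--Krengel--O'Brien/Ionescu--Tulcea assembly), which is precisely the argument behind the black box the paper cites.
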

\begin{proof}
The proof is relegated to \ref{APP_pf_memory}.
\end{proof}
Note that conditions \eqref{EQ_Markov_condition1} and \eqref{EQ_Markov_condition2} should be identified according to initial conditions which are additionally given, but cannot be derived from the transition matrices $\bm P$ and $\bm Q$.

The first order Markov fading channels can be further simplified from Theorem \ref{Th: st_stochastic_process} shown as follows.

\begin{Corollary}\normalfont\label{coro_Markov_condition_1st_order}
Consider two first-order Markov fading channels $\{H_1(m)\}$ and $\{H_2(m)\}$ with fading states arranged in an increasing manner with respect to the state-values. A two-user finite-state Markov fading BC is degraded if
\begin{align}\label{EQ_Markov_condition_1st_order}
&H_1(0)\leq_{st} H_2(0), \mbox{ and}\\
&\bar{F}_{\bm {p}_l}(n)\leq \bar{F}_{\bm {q}_s}(n),\,\forall\,n,\,\forall\,l,\,s,\,\,s.t.\,\,\,l\leq s.\label{EQ_1st_order_transition_constraint}
\end{align}
\end{Corollary}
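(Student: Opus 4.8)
The plan is to obtain Corollary \ref{coro_Markov_condition_1st_order} directly as the $k=1$ specialization of Theorem \ref{Th: st_stochastic_process}, so that no new argument is required beyond checking that the general hypotheses \eqref{EQ_Markov_condition1}--\eqref{EQ_Markov_condition3} collapse onto \eqref{EQ_Markov_condition_1st_order}--\eqref{EQ_1st_order_transition_constraint}. First I would observe that when $k=1$ the super state $\bm s=(s_{j},s_{j+1},\dots,s_{j+k-1})$ degenerates to the single symbol $s_{j}$, the matrices $\bm P$ and $\bm Q$ become the ordinary one-step transition matrices indexed by the $N$ fading states, and the mapping $g$ of \eqref{EQ_g} reduces to (a relabeling of) the identity. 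Because the states of each channel are arranged in increasing order of their values, the super-state ordering $g(l)\leq g(s)$ appearing in \eqref{EQ_Markov_condition3} is then the scalar value-ordering, which agrees with the index-ordering $l\leq s$ used in \eqref{EQ_1st_order_transition_constraint}.

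Next I would match the three hypotheses one by one. Condition \eqref{EQ_Markov_condition1} is literally the first line of \eqref{EQ_Markov_condition_1st_order} and needs nothing further. For condition \eqref{EQ_Markov_condition2}, the range $1\leq m\leq k$ shrinks to the single value $m=1$; invoking the first-order Markov property, the conditional law $[H_1(1)\mid H_1(0)=h_1(0),\dots]$ depends on the past only through $H_1(0)$, and its CCDF is exactly the row CCDF $\bar{F}_{\bm{p}_l}$ when $h_1(0)$ is the $l$-th state, and likewise $\bar{F}_{\bm{q}_s}$ for channel $2$. Hence \eqref{EQ_Markov_condition2} reads $\bar{F}_{\bm{p}_l}(n)\leq \bar{F}_{\bm{q}_s}(n)$ for all $n$ whenever $h_1(0)\leq h_2(0)$, i.e.\ whenever $l\leq s$, which is precisely \eqref{EQ_1st_order_transition_constraint}. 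Condition \eqref{EQ_Markov_condition3}, governing $m>k=1$, already reads $\bar{F}_{\bm{p}_l}(n)\leq \bar{F}_{\bm{q}_s}(n)$ for all $l\leq s$ once $g$ is the identity, so it too coincides with \eqref{EQ_1st_order_transition_constraint}.

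Since in the first-order case the single transient step $m=1$ and the stationary steps $m\geq 2$ are all governed by the same one-step matrices $\bm P$ and $\bm Q$, the two conditions \eqref{EQ_Markov_condition2} and \eqref{EQ_Markov_condition3} merge into the single requirement \eqref{EQ_1st_order_transition_constraint}. Therefore the hypotheses of Corollary \ref{coro_Markov_condition_1st_order} imply all of \eqref{EQ_Markov_condition1}--\eqref{EQ_Markov_condition3}, and the degradedness conclusion follows at once from Theorem \ref{Th: st_stochastic_process}, whose substantive coupling and same-marginal argument (through Corollary \ref{lemma_same_marginal_memory}) is already in place. The only point demanding care, and the step I expect to be the main (though modest) obstacle, is the identification of the one-step conditional CCDF $[H_1(1)\mid H_1(0)]$ with the row CCDF $\bar{F}_{\bm{p}_l}$ together with the verification that the increasing arrangement of states makes the cross-channel value comparison $h_1(0)\leq h_2(0)$ in \eqref{EQ_Markov_condition2} coincide with the index comparison $l\leq s$; everything else is direct bookkeeping.
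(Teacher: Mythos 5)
Your proposal is correct and follows exactly the route the paper intends: the corollary is stated as a direct simplification of Theorem \ref{Th: st_stochastic_process}, and your $k=1$ specialization — noting that time-homogeneity makes the $m=1$ conditional law in \eqref{EQ_Markov_condition2} coincide with the row CCDFs of $\bm P$ and $\bm Q$, so that \eqref{EQ_Markov_condition2} and \eqref{EQ_Markov_condition3} merge into \eqref{EQ_1st_order_transition_constraint} with $g$ reducing to the identity — is precisely the bookkeeping the paper leaves implicit.
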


\begin{rmk}\normalfont
For an IFSBC which is degraded, the capacity region can be described by \cite{Dabora_degraded_finite_state_BC}
\begin{align}
\mathcal{C}=\underset{n\rightarrow\infty}\lim co\bigcup_{q_n\in\mathcal{Q}_n}\Bigg\{(R_1,R_2):\,R_1\geq 0,\,R_2\geq\,0,\,R_1\leq\frac{1}{n}I(X^n;Y^n|U^n,s_0')_{q_n},\,
R_2\leq\frac{1}{n}I(U^n;Z^n|s_0'')_{q_n}\Bigg\},\label{EQ_IFSBC_capacity}
\end{align}
where $\mathcal{Q}_n$ is the set of all joint distributions $p(u^n,x^n)$, $s_0'$ and $s_0''$ are arbitrary initial states. For WTCs with finite state memory and statistical CSIT, we can also apply the above discussion to identify the degradedness. From \cite{Yogesh} we know that there is no need to optimize the channel prefixing $p(x^n|u^n)$, if the WTC is degraded. Under the assumption of statistical CSIT, if the conditions in Theorem \ref{Th: st_stochastic_process} are valid, we know that the secrecy capacity of a fading WTC with finite state can be described by \cite[Corollary 1]{Bloch_resolvability}.
\end{rmk}

%


In the following we provide two examples to show an application of our results to fading channels with memory.\\
\begin{Ex} Consider two three-state first order Markov chains. Given any $H_2(0)\leq_{st} H_1(0)$, the following transition probability matrices of $H_1$ and $H_2$, respectively, satisfy Corollary \ref{coro_Markov_condition_1st_order} and form a degraded BC
\begin{align}\label{EQ_1st_order_Markov_Transition_matrix}
\bm P=\left(
    \begin{array}{ccc}
      \frac{1}{2} & \frac{1}{4} & \frac{1}{4} \\
      \frac{3}{4} & \frac{1}{8} & \frac{1}{8} \\
      \frac{5}{8} & \frac{1}{4} & \frac{1}{8} \\
    \end{array}
  \right),\,\,\bm Q=\left(
    \begin{array}{ccc}
      \frac{1}{4} & \frac{3}{8} & \frac{3}{8} \\
      \frac{1}{8} & \frac{2}{8} & \frac{5}{8} \\
      \frac{1}{2} & \frac{1}{8} & \frac{3}{8} \\
    \end{array}
  \right).
\end{align}
%

To identify the degradedness, we resort to Corollary \ref{coro_Markov_condition_1st_order}, and verify the condition for the corresponding CCDF matrices as
\begin{align}\label{EQ_CCDF_P_Q}
\bar{F}_{\bm P}=\left(
    \begin{array}{ccc}
      \frac{1}{2} & \frac{1}{4} & 0 \\
      \frac{1}{4} & \frac{1}{8} & 0 \\
      \frac{3}{8} & \frac{1}{8} & 0 \\
    \end{array}
  \right),\,\,\bar{F}_{\bm Q}=\left(
    \begin{array}{ccc}
      \frac{3}{4} & \frac{3}{8} & 0 \\
      \frac{7}{8} & \frac{5}{8} & 0 \\
      \frac{1}{2} & \frac{3}{8} & 0 \\
    \end{array}
  \right).
\end{align}
After comparing the pairs of CCDFs: $(\bar{F}_{\bm {p}_1},\bar{F}_{\bm {q}_1})$, $(\bar{F}_{\bm {p}_1},\bar{F}_{\bm {q}_2})$, $(\bar{F}_{\bm {p}_1},\bar{F}_{\bm {q}_3})$, $(\bar{F}_{\bm {p}_2},\bar{F}_{\bm {q}_2})$, $(\bar{F}_{\bm {p}_2},\bar{F}_{\bm {q}_3})$, $(\bar{F}_{\bm {p}_3},\bar{F}_{\bm {q}_3})$, it is clear that \eqref{EQ_1st_order_transition_constraint} is valid. Therefore, it is degraded.\\
\end{Ex}

\begin{Ex}

Consider two binary-valued second order Markov chains. The general transition matrix can be expressed as \cite{Pimentel_markov_fading}
\begin{align}
  \bordermatrix{~    & 00 & 01 & 10 & 11 \cr
                  00 & \frac{P_{000}}{P_{00}} & \frac{P_{001}}{P_{00}} & 0 & 0 \cr
                  01 &  0 & 0& \frac{P_{010}}{P_{01}} & \frac{P_{011}}{P_{01}}  \cr
                  10 & \frac{P_{100}}{P_{10}} & \frac{P_{101}}{P_{10}} & 0 & 0 \cr
                  11 & 0 & 0 & \frac{P_{110}}{P_{11}} & \frac{P_{111}}{P_{11}} \cr},
\end{align}
where row and column indices show the current and next super states, respectively.

Consider the following transition matrices
\begin{align}
\bm P=\left(
    \begin{array}{cccc}
      \frac{1}{2} & \frac{1}{2} & 0 & 0 \\
      0 & 0 & \frac{1}{3} & \frac{2}{3} \\
      \frac{1}{4} &       \frac{3}{4} & 0 & 0 \\
      0& 0& \frac{1}{5} & \frac{4}{5} \\
    \end{array}
  \right),\,\,\bm Q=\left(
    \begin{array}{cccc}
     \frac{1}{3} & \frac{2}{3} & 0 & 0 \\
      0 & 0 & \frac{1}{4} & \frac{3}{4} \\
      \frac{1}{5} &       \frac{4}{5} & 0 & 0 \\
      0& 0& \frac{1}{6} & \frac{5}{6} \\
    \end{array}
  \right).\label{EX_2nd_order_Markov}
\end{align}

The corresponding CCDF matrices can be easily derived as
\begin{align}\label{EQ_CCDF_P_Q_2nd_order}
\bar{F}_{\bm P}=\left(
    \begin{array}{cccc}
      \frac{1}{2} & 0 & 0 & 0 \\
      1 & 1 & \frac{2}{3} & 0 \\
      \frac{3}{4} &       0 & 0 & 0 \\
      1& 1& \frac{4}{5} & 0 \\
    \end{array}
  \right),\,\,\bar{F}_{\bm Q}=\left(
    \begin{array}{cccc}
     \frac{2}{3} & 0 & 0 & 0 \\
      1 & 1 & \frac{3}{4} & 0 \\
      \frac{4}{5} &       0 & 0 & 0 \\
      1& 1& \frac{5}{6} & 0 \\
    \end{array}
  \right).
\end{align}
By comparing the pairs of CCDFs: $(\bar{F}_{\bm {p}_1},\bar{F}_{\bm {q}_1})$, $(\bar{F}_{\bm {p}_1},\bar{F}_{\bm {q}_2})$, $(\bar{F}_{\bm {p}_1},\bar{F}_{\bm {q}_3})$, $(\bar{F}_{\bm {p}_1},\bar{F}_{\bm {q}_4})$, $(\bar{F}_{\bm {p}_2},\bar{F}_{\bm {q}_2})$, $(\bar{F}_{\bm {p}_2},\bar{F}_{\bm {q}_4})$, $(\bar{F}_{\bm {p}_3},\bar{F}_{\bm {q}_3})$, $(\bar{F}_{\bm {p}_3},\bar{F}_{\bm {q}_4})$, $(\bar{F}_{\bm {p}_4},\bar{F}_{\bm {q}_4})$, we can verify \eqref{EQ_Markov_condition3}. In addition, given any initial conditions satisfying \eqref{EQ_Markov_condition1},
\begin{align}
[H_2(1)|H_2(0)=h_2(0)]&\leq_{st} [H_1(1)|H_1(0)=h_1(0)],\label{EQ_2nd_order_inital_condition2}\\
[H_2(2)|H_2(1)=h_2(1),\,H_2(0)=h_2(0)]&\leq_{st}[H_1(2)|H_1(1)=h_1(1),\,H_1(0)=h_1(0)],\label{EQ_2nd_order_inital_condition1}
\end{align}
we can identify that it is degraded.
\end{Ex}

\section{Conclusion}\label{Sec_conclusion}
In this paper, we investigated the ergodic capacity of several fading memoryless Gaussian multiuser channels, when only the statistics of the channel state information are known at the transmitter. We first classify the fading channels through their joint probability distributions by
which we are able to obtain the capacity results. Schemes from the maximal coupling, coupling, and copulas are derived and the interrelation is characterized. Based on the classification, we derive sufficient conditions to obtain the capacity regions. Results include Gaussian interference channels, Gaussian broadcast channels and Gaussian wiretap channels. Extension of the framework to channels with finite-state memory is also considered, wherein the Markov fading channel is discussed as a special case. Practical examples illustrate the successful applications of the derived results.

\renewcommand{\thesection}{Appendix I}
\section{Proof of Theorem \ref{Th_all}}\label{APP_all}
We divide the proof into three cases, corresponding to $f$, $f'$, and $f''$, respectively.

\textit{Validation of $f$:} We first prove that by combining maximal coupling with the choice of $\mathcal{A}$ in \eqref{EQ_coupling_construction}, we can solve $\textbf{Problem 1}$. We then show that \eqref{EQ_max_coupling1} and \eqref{EQ_max_coupling2}  indeed achieve the maximal coupling.

Recall that in the proof we have two goals:
\begin{itemize}
\item[G1.] To align realizations of those channel gain pairs both belonging to $\min\{f_{H_1}(h), f_{H_2}(h)\},\,\forall h$, in the sense that $\tilde{h}_1=\tilde{h}_2$;
\item[G2.] The remaining realizations of channel pairs follow a unique trichotomy order.
\end{itemize}

For this purpose, we introduce a result of the maximal coupling.
\begin{Prop}\cite[Proposition 2.5]{Ross_2nd_course}\label{Prop_Ross_Max_coupling}
Suppose $X$ and $Y$ are random variables with respective piecewise continuous density functions $f_X$ and $f_Y$. The maximal coupling $(\tilde{X},\tilde{Y})$ for $(X,\,Y)$ results in
\begin{align}\label{EQ_max_coupling_result}
{\textnormal{ Pr(}}\tilde{X}=\tilde{Y})=\int_{-\infty}^{\infty}\min(f_X(x),\,f_Y(x))dx.
\end{align}
\end{Prop}

Now we show that \eqref{EQ_max_coupling1} and \eqref{EQ_max_coupling2} can achieve the maximal coupling. To be self-contained, we restate the important steps of the proof of \cite[Proposition 2.5]{Ross_2nd_course} in the following. Define $\mathcal{H}=\{h:f_{H_1}(h)<f_{H_2}(h)\}$. Any coupling $(\tilde{H}_1,\,\tilde{H}_2)$ of $(H_1,\,H_2)$ should satisfy
\begin{align}
\mbox{Pr(}\tilde{H}_1=\tilde{H}_2)&=\mbox{Pr(}\tilde{H}_1=\tilde{H}_2,\,\tilde{H}_1\in\mathcal{H})
+\mbox{Pr(}\tilde{H}_1=\tilde{H}_2,\,\tilde{H}_2\in\mathcal{H}^c)\notag\\
&\leq \mbox{Pr(}\tilde{H}_1\in\mathcal{H})+\mbox{Pr(}\tilde{H}_2\in\mathcal{H}^c)\notag\\
&\overset{(a)}= \mbox{Pr(}{H}_1\in\mathcal{H})+\mbox{Pr(}{H}_2\in\mathcal{H}^c)\notag\\
&=\int_{\mathcal{H}}f_{H_1}(h)dh+\int_{\mathcal{H}^c}f_{H_2}(h)dh\notag\\
&\overset{(b)}=\int_{-\infty}^{\infty}\min(f_{H_1}(h),\,f_{H_2}(h))dh\notag\\
&\overset{(c)}=p,\label{EQ_max_coupling_forward}
\end{align}
where (a) is by Definition \ref{Def_coupling}, (b) is by the definition of $\mathcal{H}$ and (c) is by the definition of $p$ in Theorem \ref{Th_all}.

By invoking Proposition \ref{Prop_Ross_Max_coupling} we know that, due to maximal coupling, \textit{all} channel pairs belonging to  $\min\{f_{H_1}(h), f_{H_2}(h)\}$ can be aligned such that $\tilde{h}_1=\tilde{h}_2$, i.e., whose probability is as \eqref{EQ_max_coupling_result}, which fulfills G1. To achieve G2, we require that the supports of $\tilde{h}_1$ and $\tilde{h}_2$ belonging to $f_{H_1}(h)-f_{min}(h)$ and $f_{H_2}(h)-f_{min}(h)$, respectively, do not intersect. Otherwise, there is still an ambiguity in the orders again. Now we select the feasible set $\mathcal{A}$. To ensure that the trichotomy order between the realizations of $\tilde{H}_1$ and $\tilde{H}_2$ in \eqref{EQ_max_coupling1} is fixed, a sufficient condition is to enforce that $\tilde{f}_{H_1}$ and $\tilde{f}_{H_2}$ from \eqref{EQ_max_coupling1} do not intersect:
\begin{align}
\sup \mbox{ supp}( \tilde{f}_{H_1} )<\inf \mbox{ supp}( \tilde{f}_{H_2} ).
\end{align}

To proceed, we prove the following result:
\begin{align}\label{EQ_iff_cond_usual_stoch_order}
\sup \mbox{ supp}( \tilde{f}_{H_1} )<\inf \mbox{ supp}( \tilde{f}_{H_2} )\mbox{   if and only if   } H_1\leq_{st} H_2.
\end{align}
We first prove the "only if" direction. We have the following relations:
\begin{align}
H_1\leq_{st} H_2 \mbox{ if and only if } & \int_0^x f_{H_1}(h)dh\geq \int_0^x f_{H_2}(h)dh, \,\forall x,\label{EQ_int_fk-fmin0}\\
\mbox{ if and only if }  & \int_0^x \tilde{f}_{H_1}(h)dh\geq \int_0^x \tilde{f}_{H_2}(h)dh,\,\forall x,\label{EQ_int_fk-fmin}
\end{align}
where \eqref{EQ_int_fk-fmin0} is by definition of usual stochastic order; \eqref{EQ_int_fk-fmin} is by subtracting both sides on the RHS of (a) by $\int_0^x f_{min}(h)dh$. It is clear that, if $\sup \mbox{ supp}( f_{H_1}-f_{min} )<\inf \mbox{ supp}( f_{H_2} -f_{min})$, then \eqref{EQ_int_fk-fmin} is valid and hence $H_1\leq_{st}H_2$.

Now we prove the "if" direction by contradiction. We first rewrite \eqref{EQ_max_coupling1} as
\begin{align}
f_{H_k}(h)&=(1-p)\tilde{f}_{H_k}(h)+f_{min}(h),\,\,k=1,\,2,\,\forall h,\notag
\end{align}
or,
\begin{align}\label{EQ_F_tild_F}
F_{H_k}(h)&=(1-p)\tilde{F}_{H_k}(h)+F_{min}(h),\,\,k=1,\,2,\,\forall h.
\end{align}
By subtracting $F_{H_1}$ from $F_{H_2}$ according to \eqref{EQ_F_tild_F}, we have
\begin{align}\label{EQ_diff_F1_F2}
0\geq F_{H_2}(h)-F_{H_1}(h)=(1-p) \left(\tilde{F}_{H_2}(h)-\tilde{F}_{H_1}(h)\right),\,\,\forall h,
\end{align}
where the inequality is due to $H_1\leq_{st}H_2$. Since $1-p\geq 0$ and by definition of usual stochastic order, we can equivalently express \eqref{EQ_diff_F1_F2} by
\begin{align}\label{EQ_sto_ord_contradiction}
\tilde{H}_1\leq_{st}\tilde{H}_2.
\end{align}
To show the contradiction, we assume that
\begin{align}\label{EQ_counter_proposition}
\sup \mbox{ supp}( \tilde{f}_{H_1})\geq \inf \mbox{ supp}( \tilde{f}_{H_2}).
\end{align}
Since the intersection of $\tilde{f}_{H_1}$ and $\tilde{f}_{H_2}$ is null from \eqref{EQ_max_coupling1}, there are only two possibilities to attain \eqref{EQ_counter_proposition}:
\begin{enumerate}
\item $\tilde{f}_{H_2}$ has at least one disconnected part $\tilde{f}_{H_2}'$, such that $\inf\mbox{ supp}(\tilde{f}_{H_2}')>\sup\mbox{ supp}(\tilde{f}_{H_1})$. However, such $\tilde{f}_{H_2}$ results in cross points between $\tilde{F}_{H_1}$ and $\tilde{F}_{H_2}$ within the open interval (0, 1) of the range of $\tilde{F}_{H_1}$ and $\tilde{F}_{H_2}$, which violates \eqref{EQ_sto_ord_contradiction}  and then also violates $H_1\leq_{st}H_2$.
\item $\tilde{f}_{H_2}$ is connected. Then to attain \eqref{EQ_counter_proposition}, we can only have $\tilde{f}_{H_1}(h)>\tilde{f}_{H_2}(h),\,\forall\,h$, which violates \eqref{EQ_sto_ord_contradiction}  and then also violates $H_1\leq_{st}H_2$.
\end{enumerate}
Since both cases violate the assumption, we complete the proof of \eqref{EQ_iff_cond_usual_stoch_order}.

From \eqref{EQ_iff_cond_usual_stoch_order}, it is clear that the set $\mathcal{A}$ which can preserve the trichotomy order of the realizations generated according to \eqref{EQ_max_coupling1}, can be ensured by the usual stochastic order. Then by combining \eqref{EQ_coupling_construction} and \eqref{EQ_max_coupling1} we know that with probability $1-p$, we have $\tilde{h}_{1}<\tilde{h}_{2}$. As the result, by the maximal coupling, we can construct equivalent channels where there are only two relations between the fading channels realizations $\tilde{h}_1$ and $\tilde{h}_2$: 1) $\tilde{h}_1=\tilde{h}_2$, with probability $p$, and, 2) $\tilde{h}_{1}<\tilde{h}_{2} $, with probability $1-p$.

The selection of $(\tilde{H}_1,\,\tilde{H}_2)$ in \eqref{EQ_max_coupling1} and \eqref{EQ_max_coupling2} can be verified as a coupling as follows. Assume the PDF of $\tilde{H}_k$ is switched between \eqref{EQ_max_coupling1} and \eqref{EQ_max_coupling2}, controlled by $V\sim\mbox{Bern}(p)$. Then, we have:
\begin{align}
\mbox{Pr(}\tilde{H}_k\leq h')\overset{(a)}=& p\cdot\mbox{Pr(}\tilde{H}_k\leq h'|V=1)+(1-p)\cdot\mbox{Pr(}\tilde{H}_k\leq h'|V=0)\notag\\
=&p\int_{\infty}^{h'} \frac{\min(f_{H_1}(h),\,f_{H_2}(h))}{p} dh+(1-p)\int_{\infty}^{h'} \frac{f_{H_k}(h)-\min(f_{H_1}(h),\,f_{H_2}(h))}{1-p} dh\notag\\
=&\int_{\infty}^{h'}f_{H_k}(h)dh=\mbox{Pr(}{H}_k\leq h'),\,\mbox{ for } k=1,\,2,\label{EQ_check_coupling}
\end{align}
where (a) is by law of total probability. Hence, it is clear that \eqref{EQ_check_coupling} fulfills the definition of coupling in Definition \ref{Def_coupling}. On the other hand, it is clear that $\mbox{Pr(}\tilde{H}_1=\tilde{H}_2)\geq \mbox{Pr(}\tilde{H}_1=\tilde{H}_2,\,V=1)=p$. Therefore, from \eqref{EQ_max_coupling_forward} and \eqref{EQ_check_coupling}, we know that \eqref{EQ_max_coupling1} and \eqref{EQ_max_coupling2} can achieve the maximal coupling, which completes the proof of the first case.\\
\textit{Validation of $f'$:}
The proof of the coupling theorem \cite[Ch. 2]{Ross_2nd_course} provides us a constructive way to find $f'$, which is restated as follows for a self contained proof. If $H_1\leq_{st} H_2$, and if the generalized inverses $F_{H_1}^{-1}$ and $F_{H_2}^{-1}$ exist, where the generalized inverse of $F$ is defined by $F^{-1}(u)=\inf
\{x\in\mathds{R}:\,F(x)\geq u,\,u\in[0,\,1]\}$, then the equivalent channels $\tilde{H}_1$ and $\tilde{H}_2$ can be constructed by $\tilde{H}_1=F_{H_1}^{-1}(U)$ and $\tilde{H}_2=F_{H_2}^{-1}(U)$, respectively, where $U\sim \mbox{Unif}(0,1)$. This is because
\begin{align}
\mbox{Pr(}\tilde{H}_1\leq h)=\mbox{Pr(}F_{H_1}^{-1}(U)\leq h)=\mbox{Pr(}U\leq F_{H_1}(h))=F_{H_1}(h),
\end{align}
i.e., $\tilde{H}_1$ has the same CDF as ${H_1}$. Similarly, $\tilde{H}_2$ has the same CDF as ${H_2}$. Since $H_1\leq_{st} H_2$, from Definition \ref{shaked_stochastic_order} we know that $F_{H_2}(h)\leq F_{H_1}(h)$, for all $h$. Then it is clear that $F_{H_1}^{-1}(u)\leq F_{H_2}^{-1}(u),\,\mbox{ for all } u\in [0,1]$, such that $\mbox{Pr(}\tilde{H}_1\leq \tilde{H}_2)=1$. Therefore, we attain \eqref{EQ_construct_f}, which completes the proof of the second case.

\textit{Validation of $f''$:} We first derive a joint distribution from the Fr\'{e}chet-Hoeffding-upper bound for the survival copulas. Then we show the validity of $f''$ by proving that $f''$ is equivalent to $f'$. From Fr\'{e}chet-Hoeffding bounds \cite[Sec. 2.5]{Nelson_copulas} we know that a joint CDF ${F}_{H_1H_2}(h_1,h_2)$ can be upper and lower bounded by the marginals $F_{H_1}(h_1)$ and $F_{H_2}(h_2)$ as follows:
\begin{align}\label{EQ_FH_bounds}
\max\{F_{H_1}(h_1)+F_{H_2}(h_2)-1,\,0\}\leq {F}_{H_1H_2}(h_1,h_2)\leq \min\{F_{H_1}(h_1),\,F_{H_2}(h_2)\}.
\end{align}
On the other hand, by definition of the joint CCDF $\bar{F}_{H_1H_2}(h_1,h_2)$ and the joint CDF ${F}_{H_1H_2}(h_1,h_2)$, we can easily see:
\begin{align}\label{EQ_CDF_CCDF}
\bar{F}_{H_1H_2}(h_1,h_2) = 1 - F_{H_1}(h_1) - F_{H_2}(h_2) + {F}_{H_1H_2}(h_1,h_2).
\end{align}
By selecting the upper bound in \eqref{EQ_FH_bounds} as the desired ${F}_{H_1H_2}(h_1,h_2)$ and substitute it into \eqref{EQ_CDF_CCDF}, we can get:
\begin{align}
\bar{F}_{H_1H_2}(h_1,h_2)&= 1 - F_{H_1}(h_1) - F_{H_2}(h_2) + \min\{F_{H_1}(h_1),\,F_{H_2}(h_2)\}\notag\\
&= 1 + \min\{-F_{H_1}(h_1),\,-F_{H_2}(h_2)\}\notag\\
&\overset{(a)}= \min\{\bar{F}_{H_1}(h_1),\,\bar{F}_{H_2}(h_2)\},\label{EQ_joint_CCDF_copula}
\end{align}
where (a) is by the definition of CCDF,

Now we show that $f'$ is equivalent to $f''$. By the definition of joint CCDF, it is clear that the marginal distributions are
unchanged by the construction of $\bar{F}_{\tilde{H}_1,\,\tilde{H}_2}$ in \eqref{EQ_new_joint_CCDF}, i.e.,
$\bar{F}_{\tilde{H}_1}(\tilde{h}_1)=\bar{F}_{\tilde{H}_1,\tilde{H}_2}(\tilde{h}_1,\,0)=\bar{F}_{H_1}(\tilde{h}_1)$
and
$\bar{F}_{\tilde{H}_2}(\tilde{h}_2)=\bar{F}_{\tilde{H}_1,\tilde{H}_2}(0,\,\tilde{h}_2)=\bar{F}_{{H}_2}(\tilde{h}_2)$. Note that channel gains are non-negative, so we substitute 0 into $\bar{F}_{\tilde{H}_1,\,\tilde{H}_2}(\tilde{h}_1,\,\tilde{h}_2)$ to marginalize it. With the selection $\mathcal{A}=\{(H_1,\,H_2): H_1\leq_{st}H_2\}$, we can prove that $\tilde{h}_1\leq \tilde{h}_2,\,\forall (\tilde{H}_1,\,\tilde{H}_2)\in \mathcal{B}= f''(\mathcal{A})$ by showing that $f''$ is equivalent to $f'$. The equivalence can be seen by showing that $(\tilde{H}_1,\tilde{H}_2)$ generated from $f'$ and $f''$ have the same joint CCDF as follows:
\begin{align}\label{EQ_f_prime_eq_f_2prime}
\mbox{Pr(}F_{H_1}^{-1}(U )\leq \tilde{h}_1,\,F_{H_2}^{-1}(U )
\leq \tilde{h}_2 )&=\mbox{Pr(}U\leq F_{H_1}(\tilde{h}_1 ),\,U \leq  F_{H_2}(\tilde{h}_2))\notag\\
& = \mbox{Pr(}U \leq  \min\{F_{H_1}(\tilde{h}_1 ),\,F_{H_2}(\tilde{h}_2 )\})\notag\\
&\overset{(a)}= \min\{F_{H_1}(\tilde{h}_1 ),\,F_{H_2}(\tilde{h}_2 )\},
\end{align}
where (a) is due to the assumption that $U\sim \mbox{Unif}(0,1)$. Then from the RHS of the first equality in \eqref{EQ_joint_CCDF_copula}, we know that the joint CCDF of $\tilde{H}_1$ and $\tilde{H}_2$ incurred from $f'$ is the same as that from $f''$, which completes the proof. \QEDA

\renewcommand{\thesection}{Appendix II}
\section{Proof of Remark \ref{RMK_copula}}\label{APP_copula}
In the following we directly verify that $C_0(F_{H_1}(\tilde{h}_1),\,F_{H_2}(\tilde{h}_2))=\min\{F_{H_1}(\tilde{h}_1 ),\,F_{H_2}(\tilde{h}_2 )\}$ fulfills \eqref{EQ_2.2.a}, \eqref{EQ_2.2.b}, and \eqref{EQ_2.2.3}, respectively. It can be easily seen that $C_0(F_{H_1}(\tilde{h}_1),\,0)=C_0(0,\,F_{H_2}(\tilde{h}_2))=0,$ then \eqref{EQ_2.2.a} is fulfilled. We can also easily see that
\begin{align}
\min\{\bar{F}_{H_1}(\tilde{h}_1),1\}=\bar{F}_{H_1}(\tilde{h}_1)\mbox{ and }\min\{1,\bar{F}_{H_2}(\tilde{h}_2)\}=\bar{F}_{H_2}(\tilde{h}_2),
\end{align}
then \eqref{EQ_2.2.b} is fulfilled. To check \eqref{EQ_2.2.3}, we first define $u_j\triangleq \bar{F}_{H_{1}}(\tilde{h}_{1,j})$ and $v_j\triangleq \bar{F}_{H_{2}}(\tilde{h}_{2,j})$, $j=1,\,2$, where the subscript $j$ indicates the different realizations of $H_1$ and $H_2$. The condition $\{u_1\leq u_2$ and $v_1\leq v_2\}$ is composed by the following cases: 1) $v_1\leq v_2\leq u_1 \leq u_2$, 2) $v_1\leq u_1\leq v_2 \leq u_2$, 3) $v_1\leq u_1\leq u_2 \leq v_2$, 4) $u_1\leq v_1\leq v_2 \leq u_2$, 5) $u_1\leq v_1\leq u_2 \leq v_2$, 6) $u_1\leq u_2 \leq v_1 \leq v_2$. We can further merge the above cases into the following 4 classes:\\
Class 1 (Cases 1 and 2):  $u_2\geq v_2,\,u_1\geq v_1$: the LHS of \eqref{EQ_2.2.3} can be expressed as
\begin{align}\label{EQ_2increasing1}
v_2-v_1-\min(u_1,v_2)+v_1=v_2-\min(u_1,v_2)\geq 0.
\end{align}
Class 2 (Case 4): the LHS of \eqref{EQ_2.2.3} can be expressed as
\begin{align}\label{EQ_2increasing2}
v_2-v_1-u_1+u_1=v_2-v_1\geq 0.
\end{align}
Class 3 (Case 3): the LHS of \eqref{EQ_2.2.3} can be expressed as
\begin{align}\label{EQ_2increasing3}
u_2-v_1-u_1+v_1=u_2-u_1\geq 0.
\end{align}
Class 4 (Cases 5 and 6):  $u_2\leq v_2,\,u_1\leq v_1$: the LHS of \eqref{EQ_2.2.3} can be expressed as
\begin{align}\label{EQ_2increasing4}
u_2-\min(u_2,v_1)-u_1+u_1=u_2-\min(u_2,v_1)\geq 0.
\end{align}
Therefore, from \eqref{EQ_2increasing1}, \eqref{EQ_2increasing2}, \eqref{EQ_2increasing3}, and \eqref{EQ_2increasing4}, the selection of \eqref{EQ_new_joint_CCDF} is a copula, which completes the proof. \QEDA

\renewcommand{\thesection}{Appendix III}
\section{Proof of Theorem \ref{Th_cond_strong_IC}}\label{APP_Th_cond_strong_IC}
We first extend the capacity result of a \textit{uniformly strong IC} (US IC) \cite{Sankar_ergodic_GIC} with perfect CSIT, in which at each time instant the realizations of channel gains satisfy $h_{21}\geq h_{11}$ and $h_{12}\geq h_{22}$, to the case with statistical CSIT, which has not been reported in the literature to the best of our knowledge. Then we generalize the US IC to \eqref{EQ_strong_IC_constraint}. By doing so, we can smoothly connect the stochastic orders with the capacity region. To prove the ergodic capacity region, we extend the proof in \cite[Theorem 3]{Sankar_ergodic_GIC} with proper modifications to fit our assumptions. For the achievable scheme, it is clear that allowing each receiver to decode both messages from the two transmitters provides an inner bound, i.e., \eqref{EQ_CMAC}, of the capacity region.
We now establish a matching outer bound, by showing that \eqref{EQ_CMAC_def} is an outer bound\footnotetext{The single user capacity outer bounds of $R_1$ and $R_2$ can be easily derived. Therefore, here we only focus on the sum capacity outer bound.} of the capacity region of the considered model, where a genie bestows the information of the interference to only one of the receivers, e.g., the second receiver, which is equivalent to setting $h_{21}=0$. By this genie aided channel, we aim to prove\footnotemark.
\begin{align}\label{EQ_genie_UB_sum_rate_final1}
R_1+R_2\leq \mathds{E}\left[C\left(H_{11}P_1+H_{12}P_2\right)\right].
\end{align}

From Fano's inequality we know that the sum rate must satisfy
\begin{align}
&(R_1+R_2-\epsilon)\notag\\
\overset{(a)}\leq& I(X_1^n;Y_1^n|\tilde{\bm H}^n)+I(X_2^n;Y_2^n|X_1^n,\tilde{\bm H}^n)\notag\\
\overset{(b)}=& \mathds{E}[I(X_1^n; \tilde{h}_{11}^n X_1^n+ \tilde{h}_{12}^n X_2^n+N_1^n |\tilde{\bm H}^n=\tilde{\bm h}^n)+I(X_2^n; \tilde{h}_{22}^n X_2^n+N_2^n |\tilde{\bm H}^n=\tilde{\bm h}^n)]\notag\\
=& \mathds{E}[h( \tilde{h}_{11}^n X_1^n+ \tilde{h}_{12}^n X_2^n+N_1^n |\tilde{\bm H}^n=\tilde{\bm h}^n)-h( \tilde{h}_{12}^n X_2^n+N_1^n |\tilde{\bm H}^n=\tilde{\bm h}^n)+h( \tilde{h}_{22}^n X_2^n+N_2^n |\tilde{\bm H}^n=\tilde{\bm h}^n)-h(N_2^n) ]\notag\\
\overset{(c)}\leq& \mathds{E}\left[\sum\limits_{k=1}^n \left(h( \tilde{h}_{11,k} X_{1,k}\!+\! \tilde{h}_{12,k} X_{2,k}\!+\!N_{1,k}|\tilde{\bm H}=\tilde{\bm h})\!-\!h(N_{2,k})\right)\!+\!h( \tilde{h}_{22}^n X_2^n+N_2^n |\tilde{\bm H}=\tilde{\bm h}) \!-\!h( \tilde{h}_{12}^n X_2^n\!+\!N_1^n |\tilde{\bm H}=\tilde{\bm h})\right],\label{EQ_genie_UB_sum_rate_1}
\end{align}
where on the RHS of (a), the condition of the second term is due to the genie and we define $\tilde{\bm H}^n\triangleq[\tilde{H}_{11}^n,\,\tilde{H}_{12}^n,\,\tilde{H}_{22}^n]$; in (b) the expectation is over $\tilde{\bm H}^n$. To simplify the notation, we omit the subscript of $\tilde{\bm H}^n$ in expectation; (c) is by applying the chain rule of entropy and conditioning reduces entropy and i.i.d. property to the first and the fourth terms on the RHS of the second equality, respectively. Since the last term on the RHS of \eqref{EQ_genie_UB_sum_rate_1} has a sign change and also it is not as simple as the term $h(N_2^n)$, we concentrate on the single letterization of $h( \tilde{H}_{22}^n X_2^n+N_2^n|\tilde{\bm H}=\tilde{\bm h} ) -h( \tilde{H}_{12}^n X_2^n+N_1^n |\tilde{\bm H}=\tilde{\bm h})$. To proceed, we exploit the property\footnote{Note that without this property, we may not be able to rearrange the outer bound of the sum rate as \eqref{EQ_genie_UB_sum_rate_desire}. This seems to be a strict condition but can be relaxed as long as the channel distributions follow proper stochastic orders, which will be explained in the latter part of this proof.} $|\tilde{h}_{12}|\geq |\tilde{h}_{22}|$ by definition of US IC as
\begin{align}
&\mathds{E}[h( \tilde{H}_{22}^n X_2^n+N_2^n |\tilde{\bm H}=\tilde{\bm h}) -h( \tilde{H}_{12}^n X_2^n+N_1^n |\tilde{\bm H}=\tilde{\bm h})]\notag\\
\overset{(a)}=& \mathds{E}[h( X_2^n+\tilde{N}_2^n|\tilde{\bm H}=\tilde{\bm h}) -h( X_2^n+\tilde{N}_1^n|\tilde{\bm H}=\tilde{\bm h}) +2\log(|\tilde{H}_{22}^n|/|\tilde{H}_{12}^n|)]\notag\\
\overset{(b)}=& \mathds{E}[h( X_2^n+\tilde{N}_1^n+N^n|\tilde{\bm H}=\tilde{\bm h}) -h( X_2^n+\tilde{N}_1^n|\tilde{\bm H}=\tilde{\bm h}) +2\log(|\tilde{H}_{22}^n|/|\tilde{H}_{12}^n|)]\notag\\
\overset{(c)}=& \mathds{E}[h( N^n+X_2^n+\tilde{N}_1^n|\tilde{\bm H}=\tilde{\bm h}) \!-\!h( N^n+X_2^n+\tilde{N}_1^n|N^n,\tilde{\bm H}=\tilde{\bm h}) \!+2\log(|\tilde{H}_{22}^n|/|\tilde{H}_{12}^n|)]\notag\\
\overset{(d)}=& \mathds{E}[I( N^n;N^n+X_2^n+\tilde{N}_1^n|\tilde{\bm H}=\tilde{\bm h}) +2\log(|\tilde{H}_{22}^n|/|\tilde{H}_{12}^n|)]\notag\\
\overset{(e)}\leq& \mathds{E}[I( N^n;N^n+\tilde{N}_1^n|\tilde{\bm H}=\tilde{\bm h}) +2\log(|\tilde{H}_{22}^n|/|\tilde{H}_{12}^n|)]\notag\\
\overset{(f)}\leq& \sum_{k=1}^n\mathds{E}\left[h(N_{2,k}|\tilde{\bm H}=\tilde{\bm h})-h(N_{1,k}|\tilde{\bm H}=\tilde{\bm h})\right],\label{EQ_genie_UB_sum_rate_2}
\end{align}
where in (a), $\tilde{N}_{2,k}\sim \mathcal{CN}(0,|\tilde{H}_{22,k}|^{-2})$ and $\tilde{N}_{1,k}\sim \mathcal{CN}(0,|\tilde{H}_{12,k}|^{-2})$; in (b) we define ${N}_k\sim \mathcal{CN}(0,|\tilde{H}_{22,k}|^{-2}-|\tilde{H}_{12,k}|^{-2})$, which uses the assumption $|\tilde{h}_{12}|\geq |\tilde{h}_{22}|$, while $N^n=[N_1,\,N_2,\,\cdots N_n]^T$ and $N^n$ is independent of $\tilde{N}^n$ and $X_2^n$ in the first term; in (c), we use the same assumption in (b) in the second term; in (d), we treat $N^n$ as the transmitted signal and $X_2^n+\tilde{N}_1^n$ as an equivalent noise at the receiver; in (e) we apply data processing inequality with the Markov chain: $N^n- \tilde{Y}_2^n- \tilde{Y}_1^n$, where $\tilde{Y}_1^n\triangleq X_2^n+\tilde{N}_1^n+N^n$ and $\tilde{Y}_2^n\triangleq \tilde{N}_1^n+N^n$; in (f) we use the fact that $N_1^n$ and $N_2^n$ are i.i.d., respectively and the assumption that $|\tilde{h}_{12}|\geq |\tilde{h}_{22}|$.


After substituting \eqref{EQ_genie_UB_sum_rate_2} into \eqref{EQ_genie_UB_sum_rate_1}, we obtain
\begin{align}\label{EQ_genie_UB_sum_rate_desire}
R_1+R_2&\leq \frac{1}{n}\sum_{k=1}^n I(X_{1k},X_{2k};Y_{1k}|\tilde{\bm H})\triangleq I(X_1,X_2;Y_1|\tilde{\bm H},Q),
\end{align}
where $Q$ is an auxiliary time sharing random variable uniformly distributed over $\{1,\cdots,\, n\}$. To proceed, we apply the result in \cite{Shamai_MAC_no_CSIT}, wherein the capacity region of a MAC is derived for cases in which only the receiver has perfect CSI but the transmitter has only statistical CSI. For such channels, the optimal input distribution is proved to be Gaussian. Note that since the transmitter has no instantaneous CSIT, we can neither apply power nor rate adaptation over time. As a result, maximum powers $P_1$ and $P_2$ are always used. Then we obtain \eqref{EQ_genie_UB_sum_rate_final1}.
Likewise, when the genie provides the interference only to the first receiver, we can get
\begin{align}\label{EQ_genie_UB_sum_rate_final2}
R_1+R_2\leq \mathds{E}\left[C\left(H_{21}P_1+H_{22}P_2\right)\right].
\end{align}
After comparing the outer bounds \eqref{EQ_genie_UB_sum_rate_final1} and \eqref{EQ_genie_UB_sum_rate_final2} to \eqref{EQ_CMAC} and \eqref{EQ_CMAC_def}, we can observe that decoding both messages at each receiver can achieve the capacity region outer bound.

To guarantee that the new channels constructing by \eqref{EQ_strong_IC_constraint} and \eqref{EQ_strong_IC_mapping} are equivalent to the original one, we need to verify the same marginal property:
\begin{align}
f_{Y_1|X_1X_2}=f_{Y_1'|X_1X_2} \mbox{ and } f_{Y_2|X_1X_2}=f_{Y_2'|X_1X_2},\label{EQ_SMP_IC}
\end{align}
where the received signals in the equivalent channel after the coupling are:
\begin{align}
Y_1'= H_{11}'X_1+H_{12}'X_2+N_1,\,\,Y_2'= H_{21}'X_1+H_{22}'X_2+N_2,
\end{align}
$\{H_{jk}'\},\,j,\,k=1,\,2$ follow \eqref{EQ_strong_IC_mapping}. For GICs where the noises are independent to channel gains, it suffices to prove:
\begin{align}\label{EQ_SMP_joint_distr_channel_gain}
f_{H_{22}H_{21}}=f_{H_{22}'H_{21}'}\mbox{ and }f_{H_{12}H_{11}}=f_{H_{12}'H_{11}'}.
\end{align}
The first term in \eqref{EQ_SMP_joint_distr_channel_gain} can be proved by:
\begin{align}\label{EQ_SMP_joint_distr_channel_gain2}
f_{H_{22}H_{21}}\overset{(a)}=f_{H_{22}}f_{H_{21}}\overset{(b)}=f_{H_{22}'}f_{H_{21}'}\overset{(c)}=f_{H_{22}'H_{21}'},
\end{align}
where (a) is from the assumption of the mutual independence between channel gains; (b) is from the existence of $H_{22}'=_d H_{22}$ and $H_{21}'=_d H_{21}$ due to the coupling; (c) is due to the selection $H_{21}' = F_{H_{21}}^{-1}(U_1)$, $H_{22}' = F_{H_{22}}^{-1}(U_2)$, and $U_1$ is independent of $U_2$, which leads to the fact that $H_{21}'$ is independent of $H_{22}'$. The same steps are valid for the second term in \eqref{EQ_SMP_joint_distr_channel_gain}, which completes the proof. \QEDA


\renewcommand{\thesection}{Appendix IV}
\section{Proof of Theorem \ref{Th_cond_very_strong_IC}}\label{APP_Th_cond_very_strong_IC}
We divide the proof into three parts: the feasibilities of $\mathcal{S}_1$ and $\mathcal{S}_2$ and the optimality of Gaussian input. Recall that the definition of a GIC with instantaneous very strong interference is
\begin{align}\label{EQ_AWGN_very_strong_IC_cond}
z_1\geq h_{11} \mbox{ and } \, z_2\geq h_{22},
\end{align}
where $z_1$ and $z_2$ are the realizations of $Z_1$ and $Z_2$, respectively, as defined in Theorem \ref{Th_cond_very_strong_IC}. Similar to the proof steps in \ref{APP_Th_cond_strong_IC}, we can reformulate the constraint for the case with statistical CSIT from \eqref{EQ_AWGN_very_strong_IC_cond} as:
\begin{align}\label{EQ_AWGN_very_strong_IC_cond_RV1}
Z_1\geq_{st} H_{11}\mbox{ and } Z_2\geq_{st} H_{22},
\end{align}
from which we have the coupling: $Z_1'=_d Z_1$, $H_{11}'=_d H_{11}$, $Z_2'=_d Z_2$, and $H_{22}'=_d H_{22}$. The remaining task is to prove that the conditions in the two sets $\mathcal{S}_1$ and $\mathcal{S}_2$ suffice to validate the same marginal property. We prove $f_{H_{22}H_{21}}=f_{H_{22}'H_{21}'}$ as follows.

We first prove the feasibility of $(\mathcal{A}_1,\,f'_1 )$. Note that since $H_{21}'$ and $H_{22}'$ are dependent due to the coupling in \eqref{EQ_VS_f1_prime_2nd} and also $H_{21}'\neq_d H_{21}$, the steps performed in the RHS of (b) and (c) in \eqref{EQ_SMP_joint_distr_channel_gain2} cannot be performed here. Hence we aim to prove:
\begin{align}\label{EQ_SMP_VS_GIC}
f_{H_{22}'H_{21}'}=f_{H_{22}H_{21}}\overset{(a)}=f_{H_{22}}f_{H_{21}}\overset{(b)}=f_{H_{22}'}f_{H_{21}},
\end{align}
where (a) is from the assumption of mutually independent channel gains and (b) is due to the coupling. From \eqref{EQ_SMP_VS_GIC} by Bayes' rule, it is equivalent to prove that
\begin{align}
f_{H_{21}'|H_{22}'}=f_{H_{21}}
\end{align}
is satisfied. Fix an arbitrarily constant $h\in\mbox{supp}\{H_{22}'\}$ in the following steps, we have:
\begin{align}
f_{H_{21}'|H_{22}'=h}&\overset{(a)}=f_{Z_1'(1+P_2h)|H_{22}'=h}\notag\\
&\overset{(b)}=f_{Z_1(1+P_2h)|H_{22}=h}\notag\\
&\overset{(c)}=f_{H_{21}|H_{22}=h},\notag
\end{align}
where (a) is from the definition of $Z_1'$; (b) results from the fact that $Z_1'=_d Z_1$ while both $Z_1$ and $Z_1'$ are the same function of $h$ and also $\mbox{supp}\{H_{22}'\}=\mbox{supp}\{H_{22}\}$ due to $H_{22}'=_d H_{22}$; (c) is from the definition of $Z_1$. Accordingly, we obtain
\begin{align}
f_{H_{21}'|H_{22}'}=f_{H_{21}|H_{22}}\overset{(a)}=f_{H_{21}},
\end{align}
where (a) is due to the independence between $H_{21}$ and $H_{22}$. The same steps with a different condition, i.e., the independence between $H_{12}$ and $H_{11}$, are valid for $f_{H_{11}H_{12}}=f_{H_{11}'H_{12}'}$, which completes the proof of $(\mathcal{A}_1,\,f'_1 )$.

To prove the feasibility of $(\mathcal{A}_2,\,f'_2 )$, we consider the case in which channel gains can be correlated. Again we only prove $f_{H_{22}H_{21}}=f_{H_{22}'H_{21}'}$ as above. We first define a mapping of random variables: $(H_{21},\,H_{22})\mapsto (Z_1,\,W_2)$, where $W_2\triangleq H_{22}$ is a trivial mapping. It is clear that the mapping is bijective. It is also clear that the mapping $(H_{21}',\,H_{22}' )\mapsto (Z_1' ,\,W_2')$ is the same as $(H_{21},\,H_{22})\mapsto (Z_1,\,W_2)$. Hence, if
\begin{align}\label{EQ_1st_constraint_of_SMP_VeryStrongIC}
f_{Z_1W_2}=f_{Z'_1W'_2},
\end{align}
then
\begin{align}
f_{H_{21}H_{22}}=f_{H'_{21}H'_{22}},
\end{align}
since the two Jacobians are the same. As a result, the same marginal property to the second receiver holds, which can be directly extended to the first receiver. Now we further express the condition \eqref{EQ_1st_constraint_of_SMP_VeryStrongIC} in terms of the PDFs of $H_{21}$ and $H_{22}$. Recall that we can express $Z'_1$ and $W'_2$ from the first and second terms in \eqref{EQ_AWGN_very_strong_IC_cond_RV1} by the coupling as:
\begin{align}
Z'_1=F_{Z_1}^{-1}(U_1),\,\,W'_2=F_{H_{22}}^{-1}(U_2),
\end{align}
respectively, where $U_1\sim\mbox{ Unif}(0,1)$, $U_2\sim\mbox{ Unif}(0,1)$. Note that we do not specify the relation between $U_1$ and $U_2$ till this step. Then the joint CDF of $Z'_1$ and $W'_2$ can be derived as:
\begin{align}
F_{Z'_1,\,W'_2}(a,b) &=\mbox{Pr}(Z'_1\leq a,\,W'_2\leq b)\notag\\
&=\mbox{Pr}\left(F_{Z_1}^{-1}(U_1)\leq a,\,F_{H_{22}}^{-1}(U_2)\leq b\right)\notag\\
&=\mbox{Pr}\left(U_1\leq F_{Z_1}(a),\,U_2\leq F_{H_{22}}(b)\right)\notag\\
&\overset{(a)}=\mbox{Pr}\left(U\leq F_{Z_1}(a),\,U\leq F_{H_{22}}(b)\right)\notag\\
&=\mbox{Pr}\left(U\leq \min\left\{F_{Z_1}(a),\,F_{H_{22}}(b)\right\}\right)\notag\\
&=\min\left\{F_{Z_1}(a),\,F_{H_{22}}(b)\right\},\notag
\end{align}
where in (a) we select $U_1=U_2$.\footnote{This selection leads to a more stringent constraint but we can have explicit constraints in terms of the distributions of $H_{jk},\,j,k=1,2$.} Therefore, if $Z_1$ and $W_2$ have the joint CDF as \eqref{Constraint_SMP_very_strong}, then we have $F_{Z_1,W_2}(a,b)=F_{Z'_1,\,W'_2}(a,b)$. Similarly, if $Z_2$ and $W_1$ have the joint CDF as \eqref{Constraint_SMP_very_strong},
then we have $F_{Z_2,W_1}(a,b)=F_{Z'_2,\,W'_1}(a,b)$, which validates the same marginal property.

Now we derive the ergodic capacity region of GIC with uniformly very strong interference, i.e., at each time instant it is a very strong IC, under statistical CSIT and then the condition of the uniformly very strong interference can be generalized as $\mathcal{A}_1$ or $\mathcal{A}_2$. Since there is no sum rate constraint in the capacity region of the IC with very strong interference, we can solve the optimal input distributions of GIC with statistical CSIT by considering
\begin{align}
&\underset{\substack{f_{X_1},\,f_{X_2}:\\
\mathds{E}[|X_1|^2]\leq P_1,\,\mathds{E}[|X_2|^2]\leq P_2}}{\arg\max } I( X_1;Y_1|X_2,H_{11},H_{12} ) + \mu\cdot I( X_2;Y_2|X_1,H_{22},H_{21} )\notag\\
=&\underset{\substack{f_{X_1},\,f_{X_2}:\\
\mathds{E}[|X_1|^2]\leq P_1,\,\mathds{E}[|X_2|^2]\leq P_2}}{\arg\max } I( X_1;H_{11}X_1 + Z_1|H_{11} ) + \mu\cdot I( X_2;H_{22}X_2 + Z_2|H_{22} ),\label{EQ_very_strong_equivalent2}
\end{align}
where $\mu\in\mathds{R}^+$. It is clear that for each $\mu$, \eqref{EQ_very_strong_equivalent2} can be maximized by Gaussian inputs, i.e., $X_1\sim\mathcal{CN}(0,P_1)$ and $X_2\sim\mathcal{CN}(0,P_2)$. Then the capacity region can be described by \eqref{EQ_CMAC2}.


\QEDA

\renewcommand{\thesection}{Appendix V}
\section{Proof of Theorem \ref{Th_WTC}}\label{APP_pf_WTC}

The achievability of \eqref{EQ_CWTC} can be derived by substituting $U=X\sim N(0,P_T)$ into the secrecy capacity in \cite[(8)]{csiszar1978broadcast}. In the following, we focus on the derivation of the outer bound of the secrecy capacity. In particular, we adopt the coupling scheme to show that Gaussian input is optimal for the outer bound and also show that the outer bound matches the inner bound. In the following, we first verify the validity of using the coupling scheme under the CSI assumption at Bob and Eve.

We require the original and the equivalent WTCs to have: 1) the same error probability $\mbox{Pr(}W\neq\hat{W})$ and,  2) the same equivocation rate $\frac{1}{n}h(W|Z^n)$, where $W$ and $\hat{W}$ are the transmitted and the detected secure messages at Alice and Bob, respectively. The first requirement is valid because the coupling scheme does not change Bob's channel distribution. Checking the second requirement is more involved. The reason is that we have asymmetric knowledge of the CSI at Bob and Eve. In general, to design for the worst case we assume that Eve has more knowledge of the CSI than Bob. As mentioned previously, a common assumption is that Bob knows perfectly the realization of his own channel $H$ but Eve knows perfectly both $H$ and $G$. The corresponding equivocation rate is described by
\begin{align}\label{EQ_original_equivocation_rate}
\frac{1}{n}h(W|Z^n, H^n, G^n),
\end{align}
and the calculation is determined by $f(w)f(x^n|w)f(z^n, h^n, g^n|x^n)$. If we directly apply the coupling scheme to $H$ and $G$, we will have the equivocation rate as $\frac{1}{n}h(W|\tilde{Z}^n, \tilde{H}^n, \tilde{G}^n)$, whose calculation relies on $f(\tilde{z}^n, \tilde{h}^n, \tilde{g}^n|x^n)$. Note that $f(z^n, h^n, g^n|x^n)$ may not be identical to $f(\tilde{z}^n, \tilde{h}^n, \tilde{g}^n|x^n)$ because coupling only guarantees the same marginal property but not same joint distribution. More specifically, the correlation between $H^n$ and $G^n$ can be arbitrary. In contrast, from coupling the correlation between $\tilde{H}^n $ and $\tilde{G}^n$ cannot be arbitrary, i.e., it is fixed by the marginal CDFs $F_{H}$ and $F_{G}$ and also the uniformly distributed $U$ when \eqref{EQ_WTC_f} is exploited. To avoid this inconsistence, we consider a new wiretap channel where Eve only knows $g^n$ with equivocation rate as $h(W|Z^n, G^n)$ calculated according to $f(z^n, g^n|x^n)$. Note that the secrecy capacity of the new GWTC is no less than the original one, since Eve here knows less CSI than in the original setting. Therefore, we derive the secrecy capacity of this new WTC as an outer bound of the original WTC. After applying the coupling scheme to the new WTC, we have the equivalent Eve's channel $\tilde{G}$ and the equivocation rate becomes $\frac{1}{n}h(W|\tilde{Z}^n, \tilde{G}^n)$, whose calculation is according to $f(\tilde{z}^n, \tilde{g}^n|x^n)$. A sufficient condition to ensure $\frac{1}{n}h(W|\tilde{Z}^n, \tilde{G}^n)=\frac{1}{n}h(W|{Z}^n,{G}^n)$ is that $f(z, g|x)=f(\tilde{z}, \tilde{g}|x)$, which can be attained by the coupling operation and is verified as follows:
\begin{align}
f(z,g|x)&=f(z|x,g)f(g|x)\notag\\
&\overset{(a)}=f(z|x,g)f(g)\notag\\
&\overset{(b)}=f(z|x,g)f(\tilde{g})\notag\\
&\overset{(c)}=f(\tilde{z}|x,\tilde{g})f(\tilde{g})\notag\\
&\overset{(d)}=f(\tilde{z},\tilde{g}|x),\notag
\end{align}
where (a) is due to the assumption of statistical CSIT, then $X$ and $G$ are independent; in (b), $f(g)=f(\tilde{g})$ is due to the same marginal property of the coupling operation; (c) comes from the fact that in the equivalent channel $\tilde{z}=\sqrt{\tilde{g}}x+n_Z$, the noise distribution is the same as that in the original channel and $N_Z$, $X$, and $\tilde{G}$ are mutually independent; in (d) we follow the steps in the first two equalities, reversely. Then we know that the new WTC where Eve only knows $g^n$ is equivalent to that after being applied coupling.

Based on the above discussion, we can construct a WTC equivalent to the new WTC, where Eve only knows $G$. Since $H\geq_{st} G$, the equivalent WTC is degraded, whose secrecy capacity is known as
\begin{align}
C_S'&\triangleq\max_{f_X}\,\,I(X;\tilde{Y}|\tilde{H})-I(X;\tilde{Z}|\tilde{G})\notag\\
&\overset{(a)}=\max_{f_X}\,\,I(X;\tilde{Y}|\tilde{H})-I(X;\tilde{Z}|F_G^{-1}(F_H(\tilde{H})))\notag\\
&\overset{(b)}=\max_{f_X}\,\,I(X;\tilde{Y}|\tilde{H})-I(X;\tilde{Z}|\tilde{H})\notag\\
&\overset{(c)}=\max_{f_X}\,\,I(X;\tilde{Y},\tilde{Z}|\tilde{H}),\label{EQ_Cs_new_WTC}
\end{align}
where (a) uses the relation $\tilde{G}=F_{G}^{-1}(U)$ and $\tilde{H}=F_{H}^{-1}(U)$; (b) uses the fact that $F_G^{-1}(F_H(\cdot))$ is a bijective mapping due to the generalized inverse; (c) uses the degradedness: $X-Y-Z$.
In addition, because
\begin{align}
\arg\max_{f_X}\, I(X;\tilde{Y}|\tilde{Z},\tilde{H}) =\arg\max_{f_X}\, h(\tilde{Y}|\tilde{Z},\tilde{H}),
\end{align}
we can extend \cite[Lemma 2]{Khisti_MISOME} to show that
\begin{align}
h(\tilde{Y}|\tilde{Z},\tilde{H})&\overset{(a)}=h(\tilde{Y}-\alpha \tilde{Z}|\tilde{Z},\tilde{H})\notag\\
&\overset{(b)}\leq\mathds{E}_{\tilde{H}}[h(\tilde{Y}-\alpha \tilde{Z}|\tilde{H}=\tilde{h})],\notag
\end{align}
where in (a) $\alpha$ is the linear minimum mean square error estimator of $\tilde{Y}$
from $\tilde{Z}$; in (b) the inequality is due to the fact
that conditioning only reduces differential entropy while the
equality holds by Gaussian $X$. Then both $\tilde{Y}$ and $\tilde{Z}$ are Gaussian if $X$ is Gaussian. After substituting Gaussian input into the definition of $C_S'$ in \eqref{EQ_Cs_new_WTC} with full power usage since power allocation can not be done due to statistical CSIT, we can get \eqref{EQ_CWTC}, which completes the proof. \QEDA

\renewcommand{\thesection}{Appendix VI}
\section{Proof of Theorem \ref{Th: st_stochastic_process}}\label{APP_pf_memory}
To be self-contained, we restate the \textit{strong stochastic order} \cite[Theorem 6.B.31]{shaked_stochastic_order}, which is a sufficient condition for the usual stochastic order among two random vectors or random processes.

\begin{Theorem}\cite[Theorem 6.B.31]{shaked_stochastic_order}\normalfont\label{Th_usual_st_random_process}
Let $\{X(0),\,X(1),\,X(2),\cdots\}$ and $\{Y(0),\,Y(1),\,Y(2),\cdots\}$ be two discrete-time random processes. If
\begin{align}
X(0)\leq_{st} Y(0),
\end{align}
and if
\begin{align}\label{EQ_usual_st_random_process_2st_cond}
[X(i)\,|\,X(i-1)=x(i-1),\cdots,X(1)=x(1)]\leq_{st}[Y(i)\,|\,Y(i-1)=y(i-1),\cdots,Y(1)=y(1)],
\end{align}
whenever
\begin{align}\label{EQ_usual_st_random_process_1nd_cond}
x(j)\leq y(j),\,j=1,\,2,\cdots,i-1,\,\,\,i=1,\,2,\cdots
\end{align}
then $\{X(m)\}\leq_{st}\{Y(m)\}$, $m\in\mathds{N}^0$.
\end{Theorem}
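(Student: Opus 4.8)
The plan is to reduce the process-level statement to the construction of a single \emph{monotone coupling}: a pair $(\hat X,\hat Y)$ on one probability space with $\hat X=_d\{X(m)\}$, $\hat Y=_d\{Y(m)\}$ and $\hat X(m)\le\hat Y(m)$ almost surely for every $m\in\mathds{N}^0$. Once such a coupling is available, $\{X(m)\}\le_{st}\{Y(m)\}$ follows from the easy direction of the coupling characterization of the usual stochastic order \cite{shaked_stochastic_order}: for any increasing functional $g$ one has $\mathds{E}[g(\{X(m)\})]=\mathds{E}[g(\hat X)]\le\mathds{E}[g(\hat Y)]=\mathds{E}[g(\{Y(m)\})]$, because the pathwise inequality $\hat X(m)\le\hat Y(m)$ is preserved by every increasing $g$. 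Thus the entire task is to build the coupling, which I would do coordinate by coordinate using the one-dimensional quantile coupling already exploited in this paper (cf.\ the validation of $f'$ and \eqref{EQ_construct_f}): if $A\le_{st}B$ then $F_A^{-1}(U)\le F_B^{-1}(U)$ almost surely for a common $U\sim\mbox{Unif}(0,1)$, with the correct marginals.

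For the base case, $X(0)\le_{st}Y(0)$ and Definition \ref{shaked_stochastic_order} give, via one $U_0\sim\mbox{Unif}(0,1)$, the pair $(\hat X(0),\hat Y(0))=(F_{X(0)}^{-1}(U_0),\,F_{Y(0)}^{-1}(U_0))$ with the correct marginals and $\hat X(0)\le\hat Y(0)$ almost surely. For the inductive step, assume the coupling has been defined through time $i-1$, with $(\hat X(0),\dots,\hat X(i-1))$ and $(\hat Y(0),\dots,\hat Y(i-1))$ carrying the correct finite-dimensional laws and satisfying $\hat X(j)\le\hat Y(j)$ for all $j\le i-1$. Conditioning on the realized history $\hat X(j)=x(j)$, $\hat Y(j)=y(j)$ (which almost surely satisfies $x(j)\le y(j)$ by the inductive invariant), hypothesis \eqref{EQ_usual_st_random_process_2st_cond} under condition \eqref{EQ_usual_st_random_process_1nd_cond} delivers precisely the conditional ordering $[X(i)\mid\text{past}=x]\le_{st}[Y(i)\mid\text{past}=y]$. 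Feeding these two conditional laws into the quantile coupling with a fresh $U_i\sim\mbox{Unif}(0,1)$ drawn independently of the past produces $(\hat X(i),\hat Y(i))$ with the correct conditional marginals and $\hat X(i)\le\hat Y(i)$, which preserves the invariant and advances the induction.

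The finite-dimensional laws so constructed are consistent by design, so Kolmogorov's extension theorem assembles them into a single pair of processes on all of $\mathds{N}^0$ realizing the monotone coupling; combined with the easy direction recalled above, this yields $\{X(m)\}\le_{st}\{Y(m)\}$. The step I expect to be the main obstacle is the measurability bookkeeping in the induction rather than any probabilistic content: one must work with regular conditional distributions of $X(i)$ and $Y(i)$ given their pasts and verify that the conditional generalized inverse $(x,u)\mapsto F_{X(i)\mid X^{i-1}=x}^{-1}(u)$ is jointly measurable, so that $\hat X(i)$ is a bona fide random variable. On the real line these regular conditional distributions exist and the generalized inverse is monotone in $u$ and measurable in the conditioning value, so the construction goes through; with that technical point settled, both the consistency needed for the Kolmogorov extension and the almost-sure coordinatewise ordering follow immediately from the construction.
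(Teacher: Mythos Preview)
Your proposal is correct and is the standard inductive quantile-coupling argument for this result. However, note that the paper does not actually prove this theorem: it is quoted verbatim from \cite[Theorem 6.B.31]{shaked_stochastic_order} and merely restated in Appendix~VI as a tool for proving Theorem~\ref{Th: st_stochastic_process}. There is therefore no ``paper's own proof'' to compare against; the paper treats the statement as a black box from the literature, whereas you have supplied the proof that the cited reference contains.
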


Note that $\{X(m)\}\leq_{st}\{Y(m)\}$ implies $X(m)\leq_{st}Y(m),\,\forall m$. From coupling we know that there exist $\{\tilde{H}_2(m)\}$ and $\{\tilde{H}_1(m)\}$ such that $\{\tilde{H}_1(m)\}=_{st}\{H_1(m)\}$, $\{\tilde{H}_2(m)\}=_{st}\{H_2(m)\}$, and $\mathrm{Pr}(\tilde{H}_1(m)\leq \tilde{H}_2(m))=1$, $\forall\,m\in\mathds{N}^0$, if $\{H_1(n)\}\leq_{st}\{H_2(n)\}$. Assume both channels $\{H_1(m)\}$ and $\{H_2(m)\}$ with a $k$-th order Markov structure. We can therefore modify the conditions in \eqref{EQ_usual_st_random_process_2st_cond} and \eqref{EQ_usual_st_random_process_1nd_cond} as:
\begin{align}\label{EQ_Kth_Markov_chain}
&[H_1(m)|H_1(m-1)=h_1(m-1),\cdots,H_1(m-k)=h_1(m-k)]\notag\\
\leq_{st}&[H_2(m)|H_2(m-1)=h_2(m-1),\cdots,H_2(m-k)=h_2(m-k)],\,\,\forall m\in\mathds{N}
\end{align}
with $ {h_1}(j)\leq  {h_2}(j)$ for all $j<m$. Note that the relation between $Y_i$ and $Z_i$ of an equivalently degraded channel in \eqref{EQ_stochastic_degraded} is described by $\prod_{i=1}^n \tilde{p}(z_i|y_i)$, i.e., comparing $z_i$ and $y_i$ element-wisely is sufficient. Therefore, \eqref{EQ_Kth_Markov_chain} implies  \eqref{EQ_stochastic_degraded}.

Based on the given transition matrices $\bm P$ and $\bm Q$, we can further simplify the constraint \eqref{EQ_Kth_Markov_chain} for the case $m> k$. Recall that the $j$-th entry of $\bm {p_{i}}$ and $\bm{q_{i}}$ are the transition probabilities from the $i$-th super state to the $j$-th super state of the Markov processes $\{H_1(m)\}$ and $\{H_2(m)\}$, respectively. Given $\bm {p_{i}}$ and $\bm {q_{i}}$, $i\in\{1,2,\cdots, N^k\}$, we can form the corresponding CCDF matrices, respectively, as $\bar{F}_{\bm P}=[\bar{F}_{{\bm {p}}_1}^T,\,\bar{F}_{{\bm {p}}_2}^T,\,\cdots,\bar{F}_{\bm {p}_{N^k}}^T]^T$ and $\bar{F}_{\bm Q}=[\bar{F}_{\bm {q}_1}^T,\,\bar{F}_{\bm {q}_2}^T,\,\cdots,\bar{F}_{\bm {q}_{N^k}}^T]^T$, where $\bar{F}_{\bm {p}_i}$ and $\bar{F}_{\bm {q}_i}$ are the CCDF vectors derived by $\bm {p}_i$ and $\bm {q}_i$, respectively. From Definition \ref{shaked_stochastic_order}, for $m>k$ we can equivalently express \eqref{EQ_Kth_Markov_chain} by $\bar{F}_{\bm {p}_l}(n)\leq \bar{F}_{\bm {q}_s}(n),\,\forall\,n$, with the constraint ${h_1}(j)\leq  {h_2}(j),\,\mbox{ for the time index }j<m$. To fulfill the constraints ${h_1}(j)\leq {h_2}(j)$, $j<m$, we choose the row indices $l$ and $s$  of the transition matrices of $\bm P$ and $\bm Q$, respectively, such that $g(l)\leq g(s)$ is ensured, which is due to the definition of the mapping $g$ in \eqref{EQ_g} and also the state values are listed in an increasing order. Then we use these $\{(l,s)\}$ to select feasible current channel states $H_2(m)$ and $H_1(m)$ by comparing the CCDF vectors in $\bar{F}_{\bm P}$ and $\bar{F}_{\bm Q}$. By this way, we attain \eqref{EQ_Markov_condition2} and \eqref{EQ_Markov_condition3}. Combining with \eqref{EQ_Markov_condition1}, we obtain the sufficient conditions to attain $\{H_1(m) \}\leq_{st}\{H_2(m) \}$, which implies the degradedness and completes the proof. \QEDA


\bibliographystyle{IEEEtran}
\renewcommand{\baselinestretch}{2}
\bibliography{IEEEabrv,../SecrecyPs2,2}
\end{document}